\def\bbl@set@language#1{%
  \edef\languagename{%
    \ifnum\escapechar=\expandafter`\string#1\@empty
    \else\string#1\@empty\fi}%
  \@ifundefined{babel@language@alias@\languagename}{}{%
    \edef\languagename{\@nameuse{babel@language@alias@\languagename}}%
  }%
  \select@language{\languagename}%
  \expandafter\ifx\csname date\languagename\endcsname\relax\else
    \if@filesw
      \protected@write\@auxout{}{\string\select@language{\languagename}}%
      \bbl@for\bbl@tempa\BabelContentsFiles{%
        \addtocontents{\bbl@tempa}{\xstring\select@language{\languagename}}}%
      \bbl@usehooks{write}{}%
    \fi
  \fi}
\newcommand{\DeclareLanguageAlias}[2]{%
  \global\@namedef{babel@language@alias@#1}{#2}%
}
\newtheorem{lemma}{Lemma}
\newtheorem{corollary}{Corollary}
\newtheorem{assumption}{Assumption}
\newenvironment{proof*}[1][\proofname]{%
  
  \begin{proof}[#1]}{\end{proof}}
\newcommand{\identity}{\mathbb{1}}
\renewcommand{\epsilon}{\varepsilon}
\newtcolorbox{example}[1][]{width=0.48\textwidth,boxrule=0mm,leftrule=1mm,colframe=black!75,sharp corners,before=\par\smallskip\centering,after=\par,title=#1,breakable=true}
\begin{document}

\title{Transversal Gates for Highly Asymmetric qLDPC Codes}
\date{\today}
\author{Heather \surname{Leitch}}
\author{Alastair \surname{Kay}}
\email{alastair.kay@rhul.ac.uk}
\affiliation{Royal Holloway University of London, Egham, Surrey, TW20 0EX, UK}
\begin{abstract}
Transversal gates are the ideal gates in a fault-tolerant scenario; relatively easy to implement, and minimally error propagating. Their availability will maximise fault tolerant thresholds, enabling universal quantum computation in a wider range of noisy hardware.

Transversal gates in quantum low density parity check (qLDPC) codes are largely unstudied, with the early results of Burton \& Browne suggesting that transversal non-Clifford gates may be impossible. In this paper, we contradict this expectation with constructions for both hypergraph product codes and balanced product codes, although these first examples have weak properties. We find qLDPC codes with transversal phase gates that have a number of logical qubits that grows linearly with $n$, the number of physical qubits. The distance is highly asymmetric; while the distance against bit flip errors also grows (almost) linearly in $n$, the distance against phase flip errors is limited to $\order{1}$. Moreover, existing distance rebalancing techniques are one-sided; we show that they only preserve the transversality when rebalancing to increase the asymmetry, not decrease it.

We also collate a toolbox of techniques that identify a single transversal gate from which we can individuate a large range of different transversal gates. This is critical when addressing the question of what a transversal phase gate truly means when there are many logical qubits in the system.


\end{abstract}
\maketitle

The greatest challenge towards realising a fully functional quantum computer is the management of noise. Existing devices, while containing hundreds of qubits, are very noisy, severely curtailing their useful operating time, and consequently the utility of their computations. Experimentally, we are on the verge of the tipping point wherein error correction provides a gain in quality \cite{sivak2023}. The ultimate target is clear -- a fault-tolerant quantum computer. We know that there are methods, such as concatenation \cite{aliferis2006,aliferis2007,paetznick2013a,knill2005} and surface codes \cite{dennis2002,breuckmann2016a}, that can guarantee the existence of a fault-tolerant threshold; a critical error rate below which computation can proceed for arbitrarily long times with high accuracy.

It remains unclear whether either of these options is the best long-term solution, and it is vital to study alternative methods that may reduce the overheads, particularly in terms of the number of qubits involved, so that we can achieve more with the limited number of qubits that we have available. When we design error correcting codes for a quantum computer, there are a number of desirable properties which are motivated by the demand for a high fault-tolerant threshold and ease of implementation:
\begin{itemize}
\item High density, robust storage: the number of stored qubits $k$ and the code distance $d$ should be as large as possible relative to $n$, the number of physical qubits, in an $[[n,k,d]]$ code.
\item Local: parity checks should only act on qubits which are physically close to each other.
\item Low density: parity checks should only involve a small number of qubits, and each qubit should only be involved in a small number of parity checks.
\item Universal Gate Set: unitaries should be implemented in such a way as to minimise the chance for errors to enter the system. Typically that might involve having as large a transversal gate set as possible, subject to the constraints of the Eastin-Knill theorem \cite{eastin2009}, conveying that a universal set of transverse gates is impossible.
\item Over-complete Gate Set: universality for error correcting codes necessarily comprises a finite set of gates that can synthesise other unitaries to arbitrary accuracy \cite{dawson2005,kliuchnikov2012,kliuchnikov2015}. The more gates that we have available beyond the bare minimum to achieve universality, the easier synthesis will be.
\end{itemize}
These requirements are not all mutually compatible. In particular, imposing that a code be local limits the storage density and available transversal gate sets -- a code that is local in $D$ dimensions cannot exceed $kd^\frac{2}{D-1}\sim n$, as saturated by the $D=2$ Toric code \cite{kitaev1997}, and the transversal gates can only be drawn from the $(D-1)\textsuperscript{th}$ level of the Clifford hierarchy \cite{bravyi2013,bravyi2010b}. However, new architectures are emerging in which the requirement for local gates is not as pressing \cite{bluvstein2024}.

Within the class of quantum Low Density Parity Check (qLDPC) codes, formed by removing the locality restriction while continuing to impose low density, vastly improved rates are known to exist, culminating in good codes with $k,d\sim n$ \cite{tillich2014,breuckmann2021,panteleev2022,panteleev2022a} and linear time decoding \cite{dinur2022}. Thus, at least in the asymptotic limit of size, the density must vastly out-perform both concatenated codes \footnote{without further improvements, such as \cite{yamasaki2024}.} and surface codes. Even at more modest sizes, the properties are promising \cite{bravyi2024}.

To date, little is known about the ability to perform fault-tolerant computation on qLDPC codes, other than some very general results which lack the improvements that can be yielded by specialisation \cite{gottesman2014,kovalev2013,fawzi2018}. The availability of transversal gates in these systems would have a massive impact upon fault tolerant thresholds and yet, to date, is a complete unknown. Special cases of code constructions have been shown not to have transversal gates beyond the Clifford gates \cite{burton2022}, and this has encouraged subsequent studies to take alternative routes to creating logical gates \cite{quintavalle2023,krishna2021}, such as code rewiring \cite{banfield2022}.

In this paper, we reverse this trend and show that it is indeed possible to construct qLDPC codes with transversal gate sets beyond the Clifford gates. This construction should be broadly applicable, and we demonstrate it for both balanced product code \cite{breuckmann2021,leitch2025} and the hypergraph product code \cite{tillich2014}. In \cref{sec:transversal}, we review a variation of the formalism of \cite{webster2023} that allows us to recognise whether a given code has the transversal phase gates that we're after. We establish what we mean by transversal gates in \cref{sec:toolbox}, especially in the context of a code with multiple logical qubits. We demonstrate that it is not necessary to actively produce all the individual phase gates on each logical qubit. Instead, it is good enough to find a single gate that acts transversally at the logical level, from which we will be able to extract the individual behaviour we need. This sets us on the road toward our main construction in \cref{sec:balanced_transverse} in which we use the properties of a local code in the Tanner code construction to engineer the transversal gate properties that we desire.

\section{Transversal Phase Gates}\label{sec:transversal}

We are interested in phase gates
$$
P_q=\begin{bmatrix} 1 & 0 \\ 0 & e^{2\pi i/2^q}\end{bmatrix}
$$
for positive integer $q$. Familiar instances are $Z,S,T$ with $q=1,2,3$ respectively. We assert that $q=4$ should be the primary target (see \cref{sec:toolbox}), while smaller values of $q$ will provide a path towards implementation. To differentiate the phases applied to $n$ different qubits, we take a string $p\in[2^q]^n$, with $[2^q]$ indicating the set of integers 0 to $2^q-1$ inclusive, such that
$$
P_{q}^{p}=\bigotimes_{i=1}^nP_q^{p_i}.
$$
Our aim is to apply many of these gates to the $n$ different physical qubits of a system, and realise some of these gates at the logical level. In other words,
\begin{equation}\label{eq:transverse_vs_physical}
P_{L,q}^w\equiv P_q^p
\end{equation}
where $w\in[2^q]^k$ acts on the $k$ logical qubits.

The necessary and sufficient conditions for a CSS code to have a transversal $P_q$ gate were essentially found in \cite{koutsioumpas2022,webster2023}. We restate their results more succinctly, making use of the notation that
$$
x\cdot y=(x_1y_1) (x_2y_2) \ldots (x_ny_n).
$$
If $x,y\in\{0,1\}^n$, this returns the bit string that indicates the positions where $x,y$ are both 1.
\begin{example}
The dot product works as
$$
0011\cdot 1010=0010.
$$
while the weight yields the usual inner product,
$$
|0011\cdot 1010|=1.
$$
\end{example}
The weight of the bit string $x$ is $|x|$. The dot product is associative, thus extending to triple products
$$
x\cdot y\cdot z=(x\cdot y)\cdot z
$$
and beyond. We also apply it to matrices, e.g.\ $H^{\cdot 3}$
is all possible triple-products of distinct rows of $H$, but we can choose to retain only the linearly independent ones (and by keeping only distinct rows, we are already removing terms from lower order, $H$ and $H^{\cdot 2}$).

\begin{example}
If
$$
H=\begin{bmatrix} 0 & 0 & 1 & 1 \\ 1 & 0 & 1 & 0 \end{bmatrix},
$$
then all possible inner products are
\begin{align*}
0011\cdot0011 &=0011 \\
1010\cdot 1010&=1010 \\ 
0011\cdot 1010&=0010.
\end{align*}
Since the first two are already contained in $H$,
$$
H^{\cdot 2}=\begin{bmatrix} 0 & 0 & 1 & 0 \end{bmatrix}.
$$
The weight operator applies to rows of a matrix:
$$
|H|=\begin{bmatrix}2\\2\end{bmatrix}.
$$
\end{example}

The dot product is tightly connected with addition modulo 2:
\begin{align}
x\oplus y&=x+y-2x\cdot y \label{eq:xor}\\
x\oplus y\oplus z&=x+y+z-2(x\cdot y+x\cdot z+y\cdot z)+4x\cdot y\cdot z \nonumber\\
\bigoplus_ix_i&=\sum_ix_i-2\sum_{i\neq j}x_i\cdot x_j+4\sum_{i\neq j\neq k}x_i\cdot x_j\cdot x_k-\ldots \nonumber
\end{align}
As one approaches larger additions, the leading coefficient increases by a factor of 2, and the sign alternates. 
\begin{example}$$
0011\oplus 1010=\adjustbox{valign=b}{\begin{tabular}{cr}
&0011\\ +&1010 \\-&0020\\\bottomrule &1001
\end{tabular}}
$$
\end{example}

Consider a binary matrix $H_X$ of $m$ rows. We can write
\begin{align}
H_X^{\oplus}&=\bigoplus_{i=1}^mr_i \nonumber\\ 
&=\sum_{i=1}^m(-1)^{i+1}2^{i-1}\left(\sum_{r\in H_X^{\cdot m}}r\right) \label{eq:opls_matrix_id}
\end{align}
where the $r_i$ are distinct rows of $H_X$. Note that if we want to evaluate $|H_X^{\oplus}|\text{ mod }2^q$, then we only need to consider terms of the sum $i\leq q$.

An error correcting code is specified by binary matrices that describe the parity check operations that identify any errors. In particular, the CSS codes, with which we work exclusively, give two binary matrices $H_X$ and $H_Z$, both with $n$ columns, and for which $H_XH_Z^T=0\text{ mod }2$. For example, if $r$ is a row of $H_X$, then there is a parity check that we will measure comprising
$$
X^r=\bigotimes_{i=1}^nX^{r_i}.
$$
It applies $X$ on the qubits $i$ where $r_i=1$, and does nothing to the others. If the code encodes $k$ logical qubits, then there are also $k$ logical $X$-type operators $L_X$ and $k$ $Z$-type operators $L_Z$ where
$$
L_XL_Z^T=\identity_k\text{ mod }2.
$$

\begin{lemma}\label{lem:transversal}
For a CSS code on $N$ qubits with $X$ parity check matrices $H_X$, and logical operators $L_X$, the code has transversal $P_{L,q}^w$ if there exists a $p\in[2^q]^n$ such that
\begin{equation}
|H_X^{\cdot i}\cdot L_X^{\cdot j}\cdot p|=0\text{ mod }2^{q+1-i-j}\qquad \forall i+j=1,2,\ldots, q \label{eq:stabilizer_weights}
\end{equation}
except for the case $(i,j)=(0,1)$, for which
\begin{equation}
|L_X\cdot p|=w\text{ mod }2^{q}. \label{eq:logical_weights}
\end{equation}

\end{lemma}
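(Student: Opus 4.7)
The plan is to compute the action of $P_q^p$ directly on a logical basis state $|v\rangle_L$ and show, under the stated hypotheses, that the resulting phase factor is independent of the stabiliser label and equals $e^{2\pi i (w\cdot v)/2^q}$. The natural representation is $|v\rangle_L \propto \sum_a |aH_X \oplus vL_X\rangle$, summing over $a$ ranging over the group generated by the rows of $H_X$; since $P_q^p$ is diagonal in the computational basis, $P_q^p|x\rangle = e^{2\pi i|p\cdot x|/2^q}|x\rangle$, so everything reduces to analysing $|p\cdot(aH_X\oplus vL_X)| \bmod 2^q$ as a function of $a,v$.

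The key step is to apply the XOR expansion in \eqref{eq:xor}--\eqref{eq:opls_matrix_id} componentwise to $c = aH_X\oplus vL_X = \bigoplus_{r\in S}r$, where $S$ collects the rows of $H_X$ selected by $a$ together with the rows of $L_X$ selected by $v$. Dotting with $p$ and summing the weights gives an expansion of the form
\begin{equation*}
|p\cdot c| \;=\; \sum_{\emptyset\neq T\subseteq S}(-2)^{|T|-1}\,\bigl|p\cdot{\textstyle\prod_{r\in T}r}\bigr|.
\end{equation*}
For each $T$, let $i$ count its rows drawn from $H_X$ and $j$ count its rows drawn from $L_X$. I would reduce modulo $2^q$ case by case: terms with $i+j>q$ die from the prefactor $(-2)^{i+j-1}$ alone; terms with $i+j\le q$ and $(i,j)\ne(0,1)$ die because hypothesis \eqref{eq:stabilizer_weights} forces $|p\cdot\prod r|\equiv0\pmod{2^{q+1-i-j}}$, which combined with the prefactor yields a multiple of $2^q$.

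The only surviving contributions mod $2^q$ are the $(i,j)=(0,1)$ terms: each selects a single logical row $\ell_b$ with $v_b=1$ and contributes $|p\cdot\ell_b|$, with prefactor $+1$. Summing over $b$ gives $\sum_b v_b|p\cdot\ell_b|$, and hypothesis \eqref{eq:logical_weights} identifies this with $w\cdot v\bmod 2^q$. Crucially this value is independent of $a$, so the phase is uniform across the stabiliser orbit and $P_q^p$ restricts on the code space to $P_{L,q}^w$.

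The delicate part is the bookkeeping of divisibilities: the exception $(i,j)=(0,1)$ is precisely the unique case where the prefactor $(-2)^{i+j-1}=1$ supplies no help, forcing the hypothesis to specify a \emph{value} there rather than a vanishing condition, whereas in every other regime the alternating-power-of-two prefactor and the hypothesised $2^{q+1-i-j}$ divisibility conspire exactly to kill the term mod $2^q$. A minor caveat to check is that the expansion ranges over \emph{all} products of distinct rows, so the conditions in \eqref{eq:stabilizer_weights} must really be interpreted entrywise over every $(i,j)$-fold product; the remark after \eqref{eq:opls_matrix_id} permits trimming to a linearly independent subset for computational convenience, but this is an optimisation rather than something on which the argument depends.
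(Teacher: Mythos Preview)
Your argument is correct and follows essentially the same route as the paper's proof: express $\ket{v}_L$ as a superposition over $\ket{vL_X\oplus uH_X}$, expand $|(vL_X\oplus uH_X)\cdot p|$ via \cref{eq:opls_matrix_id}, and observe that every term except the $(i,j)=(0,1)$ singletons vanishes modulo $2^q$ under the hypotheses. The one mild difference is that the paper invokes \cite{webster2023} to justify restricting attention to $|u|+|v|\le q$, whereas you derive this directly from the prefactor $(-2)^{i+j-1}$ killing all $i+j>q$ terms modulo $2^q$; your version is therefore slightly more self-contained, but the substance is the same.
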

\begin{proof}
Consider a logical state $v\in\{0,1\}^k$ acted on by our transversal gate, \cref{eq:transverse_vs_physical}:
$$
P_{L,q}^w\ket{v}_L=e^{2\pi i|v\cdot w|/2^q}\ket{v}_L.
$$
We aim to realise this logical operation by implementing some $P_q^{p}$ at the physical level, meaning this relation must hold for each of the basis vectors of $\ket{v}_L$ independently:
$$
P^{p}_q\ket{v L_X+u H_X}=e^{2\pi i|v\cdot w|/2^q}\ket{v L_X+u H_X}
$$
for all $v\in\{0,1\}^k$ and $u\in\{0,1\}^m$ if $H_X$ has $m$ rows. Thus, for all $u,v$, we have
$$
|(v L_X\oplus u H_X)\cdot p|\equiv |v\cdot w|\text{ mod }2^q.
$$
In fact, it was shown in \cite{webster2023} that we don't need to test this for all possible bit strings. It is sufficient to only consider strings $uv$ of total weight $\leq q$, making the calculation much more efficient. The goal would be to find a $p\in[2^q]^n$ that satisfies all these equations.

However, we can push this a little further. For a given $u,v$, let us combine all the rows of $H_X,L_X$ corresponding to $u,v=1$ as a matrix $H_{uv}$. Then this condition is just
$$
|H_{uv}^{\oplus}\cdot p|=|v\cdot w|\text{ mod }2^q.
$$
Expanding this using \cref{eq:opls_matrix_id}, we see that for all $u,v$ the $|v\cdot w|$ term is taken care of using
$$
|L_X\cdot p|=w\text{ mod }2^{q}.
$$
Then is it sufficient that all other terms are 0.
\end{proof}

\begin{example}
Consider the Steane code \cite{steane1996a}, which has
\begin{align*}
H_X&=\begin{bmatrix} 1 & 0 & 0 & 1 & 0 & 1 & 1 \\
0 & 1 & 0 & 1 & 1 & 0 & 1 \\
0 & 0 & 1 & 0 & 1 & 1 & 1
\end{bmatrix}\\
L_X&=\begin{bmatrix} 1 & 1 & 1 & 1 & 1 & 1 & 1 \end{bmatrix}.
\end{align*}
Does this code have transversal $S$? The choice of gate has fixed with $q=2$. Assume that $p=\begin{bmatrix} 1 & 1 & 1 & 1 & 1 & 1 & 1 \end{bmatrix}$. Now apply \cref{lem:transversal} for $i+j\leq 2$.
\begin{itemize}
\item For $i+j=1$, one case is $i=1,j=0$. One can readily verify that $H_Xp\equiv 0\text{ mod }4$.
\item The only other option is $i=0,j=1$, the special case. Since $L_X=p$, $|L_X\cdot p|=7$ (this means that by applying $S$ on every physical qubit, we'd be expecting $S^\dagger$ at the logical level).
\item For $i+j=2$, since $L_X$ comprises a single row, $L_X^{\cdot 2}$ is trivial ($i=0,j=2$).
\item In the case $i=j=1$, $|L_X\cdot H_X\cdot p|=|H_X\cdot p|\equiv 0\text{ mod }2$ reduces to a previous case.
\item This only leaves the case $i=2$, $j=0$. The linearly independent terms in $H_X^{\cdot 2}$ are
$$
H_X^{\cdot 2}\equiv \begin{bmatrix}
0 & 0 & 0 & 1 & 0 & 0 & 1 \\
0 & 0 & 0 & 0 & 1 & 0 & 1 \\
0 & 0 & 0 & 0 & 0 & 1 & 1
\end{bmatrix}
$$
such that $H_X^{\cdot 2}p\equiv 0\text{ mod }2$.
\end{itemize}
We conclude that applying $S$ on every physical qubit of the code creates logical $S^\dagger$.

What about transversal $T$? Terms in $H_X^{\cdot 3}$ localise individual qubits. Hence $|H_X^{\cdot 3}\cdot p|\equiv 0\text{ mod }2$ can only be satisfied if $p$ is even on every site, and that cannot yield a logical $T$ since $L_X\cdot p$ needs to be odd. This code does not have transversal $T$.
\end{example}

We will reduce the conditions of \cref{lem:transversal} by making the following assumption:
\begin{assumption}\label{assum:unsupported_logicals}
There exists a representation of the generators of the logical $X$ operators $L_X$ which all act on distinct sets of physical qubits, i.e.
$$
l_1\cdot l_2=0
$$
for all $l_1\neq l_2\in L_X$.
\end{assumption}
This instantly means that the conditions of \cref{eq:stabilizer_weights} are satisfied for $j\geq 2$, and can be ignored. The down side to this assumption is that it already limits the distance of our code. The distance against $X$ errors, $d_X$, satisfies $d_X\leq |l|$ for all $l\in L_X$. Since all $k$ logicals act on distinct qubits, it must be that
$$
d_Xk\leq n.
$$
If we want $k$ to scale linearly with $n$, $d_X$ cannot be better than a constant. Relaxing this assumption should be the primary target of future work.

\section{Transversal Toolbox}\label{sec:toolbox}

What logical operation should we be aiming for? In this section, we will collect a set of standard circuit-level results that allow us to recreate a wide range of individual logical phase gates from a single logical transversal phase gate. We will also consider the other gates that might be required to yield universality -- two qubit gates, both between distinct code blocks and within a single code block, and a Hadamard gate.

\subsection{Localised Logical Phase}

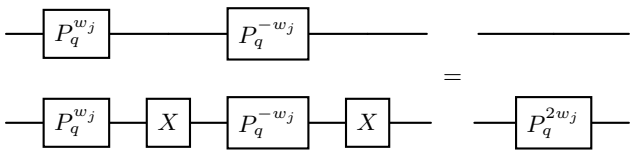
\begin{figure}
\centering
\begin{quantikz}
& \gate{P_q^{w_j}} && \gate{P_q^{-w_j}} &&\midstick[2,brackets=none]{=}&& \\
& \gate{P_q^{w_j}} & \gate{X\vphantom{P_q^{w_j}}} & \gate{P_q^{-w_j}} & \gate{X\vphantom{P_q^{w_j}}} &&\gate{P_q^{2w_j}}&
\end{quantikz}
\caption{In a gate sequence $P_{L,q}^{w},X_L,P_{L,q}^{-w},X_L$, those logical qubits (top row) which do not have the logical $X$ applied cancel, and yield an $\identity$. Those on which $X_L$ is applied (bottom row) create a localised logical gate of twice the phase.}\label{fig:first_circuit}
\end{figure}

\begin{lemma}\label{lem:transversal_localise}
For any code that has transversal $P_{L,q}^{w}$, that same code has gates $P^{w_ie_i}_{L,q-1}=P_{L,q}^{2w_ie_i}$, where $e_i$ is a standard basis vector, creating individual phase gates of double the phase.
\end{lemma}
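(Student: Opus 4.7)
The plan is to verify the circuit identity of \cref{fig:first_circuit} at both the physical and logical levels, and then to read off the transversal implementation of the localised gate. The key algebraic fact I would establish first is the single-qubit conjugation identity $XP_q^{-a}X=e^{-2\pi ia/2^q}P_q^a$, which follows immediately from the diagonal form of $P_q$. Composing then yields $P_q^a\cdot(XP_q^{-a}X)=e^{-2\pi ia/2^q}P_q^{2a}$, matching the figure up to an irrelevant global phase.

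Next I would apply this at the logical level. The sequence $P_{L,q}^w X_{L,i} P_{L,q}^{-w} X_{L,i}$ factorises across the $k$ logical qubits because $X_{L,i}$ acts trivially outside qubit $i$: for $j\neq i$ the two phase gates cancel to $\identity$, and for $j=i$ the single-qubit identity produces $P_q^{2w_i}=P_{q-1}^{w_i}$ up to a global scalar. Hence the depth-4 logical circuit implements $P_{L,q-1}^{w_ie_i}$. To turn this into a transversal gate, I would use the assumed $p\in[2^q]^n$ realising $P_{L,q}^w$ (\cref{lem:transversal}) together with the trivial transversal implementation $X^{l_i}$ of $X_{L,i}$, where $l_i$ is the $i\textsuperscript{th}$ row of $L_X$. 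Because every ingredient is a tensor product of single-qubit gates, the physical circuit $P_q^p\cdot X^{l_i}\cdot P_q^{-p}\cdot X^{l_i}$ factorises site by site: on qubits outside the support of $l_i$ the phases cancel, while on qubits in its support one picks up $P_q^{2p_j}$. The net physical operation is $P_q^{2p\cdot l_i}=P_{q-1}^{p\cdot l_i}$, a single-round transversal gate.

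The main obstacle will be the bookkeeping of the global phases accrued from each $X$-conjugation: they must match between the physical and logical calculations so that the derived physical gate realises precisely $P_{L,q-1}^{w_ie_i}$ rather than some accidental variant. This match is automatic because the same single-qubit identity drives both calculations, but writing it out carefully is the one place to slip up. Notably, no fresh invocation of \cref{lem:transversal} for the derived phase string $p\cdot l_i$ is required, since transversality is preserved under composition of single-qubit tensor products.
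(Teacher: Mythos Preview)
Your proposal is correct and follows essentially the same route as the paper: both use the depth-4 sequence $P_L\,X_{L,i}\,P_L^\dagger\,X_{L,i}$, argue via \cref{fig:first_circuit} that the phases cancel on logical qubits $j\neq i$ while doubling on qubit $i$, and then read off the physical implementation $P_{q-1}^{p\cdot l_i}$ by running the same identity site-by-site. Your write-up is slightly more explicit than the paper's---you spell out the single-qubit conjugation $XP_q^{-a}X\propto P_q^{a}$ and track the global phase---but this is extra care rather than a different argument.
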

\begin{proof}
By assumption, we have three ingredients: transversal $P_L=P_{L,q}^{w}$ implemented by $P_q^p$, its inverse $P_L^\dagger\equiv P_q^{2^q-p}$, and individual logical $X$ rotations. We follow the protocol of applying $P_L$, logical $X$ on logical qubit $i$, $P_L^\dagger$, before finally repeating the logical $X$ gate.

For all logical qubits $j\neq i$, the applied circuit is the same as the top row of \cref{fig:first_circuit}, with a net application of $\identity$. For logical qubit $i$, the bottom row of \cref{fig:first_circuit} applies, giving a net phase gate of $P^{2w_i}_q$. Note that $P_q^2=P_{q-1}$.

We can now apply the circuit identities and conclude that if $P_{L,q}^w=P_q^p$, then $P_{L,q}^{w_ie_i}=P_{q-1}^{p\cdot l_i}$ where $l_i$ is the logical $X$ operator corresponding to logical qubit $i$.
\end{proof}

Thus, with $q=4$, we can convert $P_{L,4}^{1,1,\ldots, 1}$, transversal logical $\sqrt{T}$, into individual $T$ gates on any qubit. Combined with Hadamard gates, this gives any arbitrary single qubit rotation. Moreover, the Hadamard is easier to implement via magic state distillation/teleportation than the $T$ gate by virtue of being lower in the Clifford hierarchy. Essentially, it just relies on $\ket{+}$ state preparation, which is just a measurement operation.

So far, we have described how to realise a transversal set of logical phase gates $P_q$, and how these can be localised, within the block, to give transversal $P_{q-1}$ gates. However, we are using CSS codes, which have transversal controlled-\textsc{not} between two code blocks, meaning that this physical operation implements logical controlled-\textsc{not} between matching pairs of qubits of the two copies. We can use these to significantly increase our availability of gates, in order to move toward universality.

\subsection{Localised Logical Phase with Ancillas}

We have just seen how $P_{L,q}^w$ can be used to localise individual $P_{q-1}$ logical phase gates. If we extend our set of allowed operations to include introducing other copies of the code prepared in standard states such as $\ket{0}$ or $\ket{+}$, and the measurement of logical operators, we can use standard tricks from magic state distillation in order to realise an individual logical phase gate $P_q$. What is best in terms of overhead to realise a particular localised logical $P_q$ -- transversal $P_q$ and an ancilla block, or transversal $P_{q-1}$ acting within a single block -- will depend on the details of the individual code.

\begin{lemma}\label{lem:transversal_with_ancilla}
Given transversal $P_{L,q}^w$ and an ancilla block, we can create individual $P_{L,q}^{w_ie_i}$.
\end{lemma}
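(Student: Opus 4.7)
The plan is magic-state injection: prepare an ancilla whose state, after the transversal $P_{L,q}^w$, carries the $P_q$ rotation only on logical qubit $i$. The key observation is that $P_q\ket{0}=\ket{0}$, so on any logical qubit prepared in $\ket{0}_L$ the transversal gate is invisible; only logical qubits prepared in a $+$-type state actually pick up a phase. Once such a ``localised'' magic state exists on the ancilla, a standard CNOT-plus-measurement gadget then teleports a single $P_{L,q}^{w_i e_i}$ onto the data, with any non-Pauli correction naturally living one rung lower in the $P_q$ hierarchy.

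First I would prepare the ancilla in the product state $\ket{+}_{L,i}\otimes\ket{0}_{L,\bar{\imath}}$, i.e.\ $\ket{+}_L$ on logical qubit $i$ and $\ket{0}_L$ on every other logical qubit. Under \cref{assum:unsupported_logicals} the logical $X$ operators act on disjoint sets of physical qubits, so this state is reachable by starting from $\ket{+}_L^{\otimes k}$, measuring the individual logical operators $Z_L^{e_j}$ for each $j\neq i$, and applying Pauli $X_L^{e_j}$ corrections wherever the outcome is $1$; the disjointness lets us pick representatives of those $Z_L^{e_j}$ supported away from $X_L^{e_i}$, so the measurements leave logical qubit $i$ undisturbed. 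Applying the transversal $P_{L,q}^w$ then leaves the $\ket{0}_L$ qubits untouched and sends qubit $i$ to $P_q^{w_i}\ket{+}_L$, so the ancilla is now a localised magic state for $P_{L,q}^{w_i e_i}$.

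I would then run the usual gate-teleportation gadget: a transversal CNOT with the data as control and the ancilla as target, followed by measurement and Pauli-level correction on the ancilla. The analysis decouples pair-by-pair. For each $j\neq i$ the ancilla qubit is $\ket{0}_L$, so a logical $X$-basis measurement with outcome $s_j$ leaves the data with at most a $Z_L^{s_j e_j}$ byproduct, correctable by a single Pauli. For qubit $i$ the circuit is precisely $P_q$ gate teleportation through the magic state $P_q^{w_i}\ket{+}$: a logical $Z$-basis measurement yields $P_{L,q}^{w_i e_i}$ outright (outcome $0$) or $\omega^{w_i}P_{L,q}^{-w_i e_i}$ (outcome $1$), and the latter is repaired by a single $P_{L,q}^{2 w_i e_i}=P_{L,q-1}^{w_i e_i}$ supplied by \cref{lem:transversal_localise}. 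The delicate step, and the only place where \cref{assum:unsupported_logicals} is truly doing work, is the mixed-basis ancilla preparation; everything else is standard gate teleportation, with the top-of-hierarchy correction drawn one rung down from the previous lemma.
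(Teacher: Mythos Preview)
Your proof is correct and uses the same magic-state injection strategy as the paper, but localises the phase onto logical qubit $i$ by a different device. The paper prepares the ancilla entirely in $\ket{+}_L^{\otimes k}$, applies $P_{L,q}^w$, and only \emph{afterwards} performs $X$-basis measurements on every $j\neq i$ to collapse those qubits back to $\ket{+}$; this leaves the ancilla in $(P_q^{w_i}\ket{+})_i\otimes\ket{+}_{\bar\imath}$, so that after the transversal \textsc{cnot} the $j\neq i$ ancillas (being $X$-eigenstates) leave the data untouched and a uniform $Z$-basis measurement on the whole ancilla block suffices. You instead prevent the phase from ever landing on $j\neq i$ by preparing those ancilla qubits in $\ket{0}_L$ beforehand. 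This works equally well but costs mixed-basis measurements after the \textsc{cnot} and an extra round of Pauli-$Z$ byproducts on the data for every $j\neq i$. Both routes end with the same $P_{q-1}$ correction supplied by \cref{lem:transversal_localise}.

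One minor point: your appeal to \cref{assum:unsupported_logicals} in the preparation step is unnecessary and slightly misdirected. Measuring the logical Pauli $Z_L^{e_j}$ commutes with both $X_L^{e_i}$ and $Z_L^{e_i}$ by the defining logical algebra, so it cannot disturb logical qubit $i$ regardless of physical supports; and the assumption constrains the supports of the $X$-logicals, not the $Z$-logicals, so it does not directly deliver the support disjointness you invoke for the $Z_L^{e_j}$ representatives. The paper's proof makes no use of the assumption here.
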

\begin{proof}
All CSS codes have transversal controlled-\textsc{not} available between different code blocks (see \cref{fig:inter_intra} for a depiction). Use this in the following protocol:
\begin{enumerate}
  \item Prepare an ancilla block in $\ket{+}^{\otimes k}$ by $X$ basis measurement.
  \item Apply $P_{L,q}^w$ to the ancilla block.
  \item Perform an $X$ basis measurement on every logical qubit of the ancilla block except $i$.
  \item Apply transversal c-\textsc{not} controlled by the data block and targetting the ancillas.
  \item Measure the ancilla block in the $Z$ basis.
  \item Apply any necessary phase $P_{L,q-1}$ correction based on measurement results, as specified in the following circuits. (We have these by \cref{lem:transversal_localise}.)
\end{enumerate}
At the end of step 3, every logical qubit of the ancilla block is in state $\ket{+}$ except for qubit $i$, which is in the state $P_q^{w_i}\ket{+}$. For all logical qubits except $i$, the ensuing circuit is thus
$$
\begin{quantikz}
\lstick{\ket{\psi}} & \ctrl{1} & & \rstick{\ket{\psi}} \\
\lstick{\ket{+}} & \targ{} & \meter{}
\end{quantikz}
$$
meaning that nothing happens to the original input. For logical qubit $i$, the different preparation of the ancilla implies the circuit
$$
\begin{quantikz}
\lstick{\ket{\psi}} & \ctrl{1} & \gate{P_{q-1}^{w_i}} & \rstick{$P_q^{w_i}\ket{\psi}$} \\
\lstick{$P_q^{w_i}\ket{+}$} & \targ{} &\meter{}\wire[u]{c}
\end{quantikz}
$$
\end{proof}

\subsection{Inter-Block Controlled-Phase}\label{sec:cPhase_inter}

\begin{figure}
\centering
\begin{quantikz}
&&\ctrl{1} &&\ctrl{1} &\midstick[2,brackets=none]{=} &  & \\
 & & \targ{} & & \targ{} & &  & \\
&&\ctrl{1} &&\ctrl{1} &\midstick[2,brackets=none]{=} &  \ctrl{1} & \\
 & \gate{P\vphantom{P^\dagger}} & \targ{} & \gate{P^\dagger} & \targ{} & & \gate{P^2} &
\end{quantikz}
\caption{In a gate sequence of controlled-\textsc{not} (transversal, between two different code blocks), $P_L$, controlled-\textsc{not}, $P_L^\dagger$, then those qubits that are not acted upon by $P_L$ (top row) are unchanged. Those acted upon by $P_L$ (bottom row) gain a net controlled-phase action.}\label{fig:cphase_localise}
\end{figure}
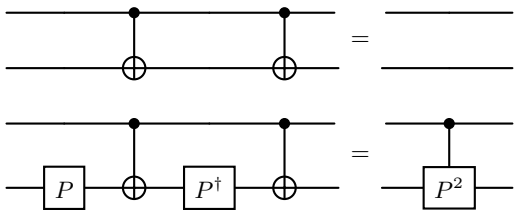

As soon as we can localise an $S$ gate (e.g.\ from logical transversal $T$, $q=3$, without ancillas or $q=2$ with ancillas), we can localise the effect of the transversal controlled-\textsc{not}, allowing us to apply a single logical controlled-phase gate between matching logical qubits of two blocks.

\begin{lemma}\label{lem:cphase}
Given $P_{L,2}^{e_i}$, we can create individual c-\textsc{phase} between the logical qubits $i$ of two code blocks.
\end{lemma}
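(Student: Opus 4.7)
The plan is to combine a standard two-qubit circuit identity with the inter-block transversal \textsc{cnot} that every CSS code admits. Concretely, I would establish the identity
$$
\mathrm{CNOT}\cdot (I\otimes S^\dagger)\cdot \mathrm{CNOT}\cdot (I\otimes S) = (S^\dagger\otimes I)\cdot \mathrm{CZ}
$$
by a routine diagonal check on the four computational basis states, and pair it with the observation that a transversal inter-block \textsc{cnot} realises a logical \textsc{cnot} on every matching pair of logical qubits of the two blocks.

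Labelling the two code blocks $A$ and $B$, the protocol---following \cref{fig:cphase_localise}---is to apply, in order, a transversal \textsc{cnot} with $A$ controlling $B$; then $P_{L,2}^{e_i}$ on block $B$; then another transversal \textsc{cnot}; then $P_{L,2}^{-e_i}$ on block $B$ (the latter obtained from three applications of $P_{L,2}^{e_i}$, since $S^4=\identity$). For every matching pair $(j,j)$ with $j\neq i$, the localised $S$ and $S^\dagger$ do not touch those qubits, so the two \textsc{cnot}s collapse to $\identity$ and the pair is left untouched. For the distinguished pair $(i,i)$, the identity above applies and produces a logical controlled-$Z$ on the pair, up to a single residual phase on qubit $i$ of block $A$; that residual is itself an instance of $P_{L,2}^{\pm e_i}$ and is applied as a final correction to complete the synthesis.

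The only substantive calculation is verifying the two-qubit identity and tracking the sign of the single-qubit correction---routine bookkeeping. The main conceptual point, that transversal inter-block \textsc{cnot} together with one localised $S$ suffices to build an individual logical controlled-phase between matching logical qubits while leaving every other matching pair alone, is exactly what \cref{fig:cphase_localise} depicts; the proof amounts to a verification of that figure, augmented by the small single-qubit correction that the figure suppresses.
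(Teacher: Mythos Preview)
Your proposal is correct and follows essentially the same construction as the paper: sandwich a localised logical $S$ on the target block between two transversal \textsc{cnot}s, then undo the $S$, so that on every pair $j\neq i$ the \textsc{cnot}s cancel while on pair $i$ one obtains controlled-$Z$. You are in fact slightly more careful than the paper's own argument: \cref{fig:cphase_localise} (and the accompanying text) asserts that the bottom row equals controlled-$P^2$ exactly, whereas your diagonal check correctly exposes a leftover $S^{\pm 1}$ on the control qubit of block~$A$, which you then remove with a final $P_{L,2}^{\mp e_i}$---a small byproduct the paper's proof suppresses.
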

\begin{proof}
The protocol is as follows:
\begin{enumerate}
\item Apply a transversal controlled-\textsc{not} between two control blocks.
\item On the target block, apply $P_{L,2}^{e_i}$.
\item Apply the same transversal controlled-\textsc{not} between the two control blocks.
\item On the target block, apply $P_{L,2}^{-e_i}$.
\end{enumerate}
On the logical qubits $j\neq i$ (\cref{fig:cphase_localise}, top row, taking $P=S$), the only action is two controlled-\textsc{not}s, which cancel each other. On logical qubits $i$ (\cref{fig:cphase_localise}, bottom row), the net effect is controlled-\textsc{phase} between that pair of logical qubits only. This is just a controlled version of \cref{fig:first_circuit}.

In practical terms, this gate is implemented by applying controlled-phase gates between equivalent pairs of qubits in the two code blocks, identified by $l_i$, the logical $X$ operator of qubit $i$.
\end{proof}

We could see this equally well by directly using the previous tools for the transversal gate analysis, which is exactly what appeared in \cite{webster2023}.

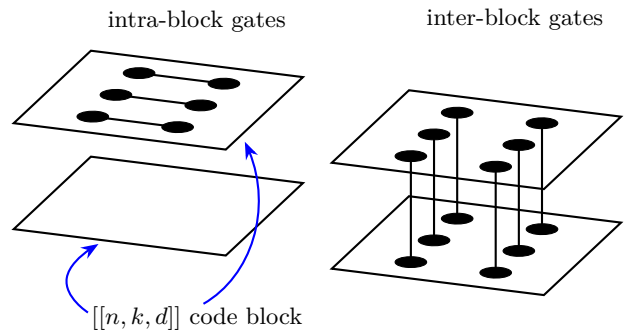
\begin{figure}
\centering
\tdplotsetmaincoords{70}{110}
\begin{tikzpicture}[scale=3,tdplot_main_coords]
\draw[thick,fill=white] (0,0,0) -- (1,0,0) -- (1,1,0) -- (0,1,0) -- cycle; 
\draw[thick,fill=white] (0,0,0.5) -- (1,0,0.5) -- (1,1,0.5) -- (0,1,0.5) -- cycle;

\draw [thick] (0.2,0.3,0.5) -- (0.2,0.7,0.5);
\draw [thick] (0.5,0.3,0.5) -- (0.5,0.7,0.5);
\draw [thick] (0.8,0.3,0.5) -- (0.8,0.7,0.5);
\draw[fill]  (0.2,0.3,0.5) circle [radius=0.07];
\draw[fill]  (0.2,0.7,0.5) circle [radius=0.07];
\draw[fill]  (0.5,0.3,0.5) circle [radius=0.07];
\draw[fill]  (0.5,0.7,0.5) circle [radius=0.07];
\draw[fill]  (0.8,0.3,0.5) circle [radius=0.07];
\draw[fill]  (0.8,0.7,0.5) circle [radius=0.07];

\draw[thick,fill=white] (0,1.5,0) -- (1,1.5,0) -- (1,2.5,0) -- (0,2.5,0) -- cycle; 
\draw[thick,fill=white] (0,1.5,0.5) -- (1,1.5,0.5) -- (1,2.5,0.5) -- (0,2.5,0.5) -- cycle;

\draw [thick] (0.2,1.8,0) -- (0.2,1.8,0.5);
\draw[fill]  (0.2,1.8,0) circle [radius=0.07];
\draw[fill]  (0.2,1.8,0.5) circle [radius=0.07];

\draw [thick] (0.2,2.2,0) -- (0.2,2.2,0.5);
\draw[fill]  (0.2,2.2,0) circle [radius=0.07];
\draw[fill]  (0.2,2.2,0.5) circle [radius=0.07];

\draw [thick] (0.5,1.8,0) -- (0.5,1.8,0.5);
\draw[fill]  (0.5,1.8,0) circle [radius=0.07];
\draw[fill]  (0.5,1.8,0.5) circle [radius=0.07];

\draw [thick] (0.5,2.2,0) -- (0.5,2.2,0.5);
\draw[fill]  (0.5,2.2,0) circle [radius=0.07];
\draw[fill]  (0.5,2.2,0.5) circle [radius=0.07];

\draw [thick] (0.8,1.8,0) -- (0.8,1.8,0.5);
\draw[fill]  (0.8,1.8,0) circle [radius=0.07];
\draw[fill]  (0.8,1.8,0.5) circle [radius=0.07];

\draw [thick] (0.8,2.2,0) -- (0.8,2.2,0.5);
\draw[fill]  (0.8,2.2,0) circle [radius=0.07];
\draw[fill]  (0.8,2.2,0.5) circle [radius=0.07];

\node at (0,0.5,0.7) {intra-block gates};
\node at (0,2,0.9) {inter-block gates};

\node at (0,0.5,-0.7) {$[[n,k,d]]$ code block};

\draw [thick,-Stealth,blue] plot [smooth,tension=1] coordinates {(1,0.35,-0.35) (1,0.25,-0.2) (1,0.4,-0.02)};
\draw [thick,-Stealth,shorten >= 0.1cm,shorten <= 0.1cm,blue] (0,0.5,-0.65) to[out=30,in=-60] (0.8,1,0.5);
\end{tikzpicture}
\caption{For multiple instances (``blocks'') of a qLDPC code, transversal gates can be applied in two ways. Inter-block copies interact different copies For example, all CSS codes have transversal controlled-\textsc{not}. It may also be possible to generate intra-block interactions where only a single copy is required.}\label{fig:inter_intra}
\end{figure}


\begin{corollary}
For a code with transversal $P_q$, this can be used to build individual controlled-$P_{q-2}$ gates without the use of ancillas, or controlled-$P_{q-1}$ gates with ancillas.
\end{corollary}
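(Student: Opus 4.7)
The plan is to assemble the corollary from the lemmas already established, invoking a mild generalisation of the construction in \cref{lem:cphase}. The key observation is that the four-gate sandwich of \cref{fig:cphase_localise} is not specific to $P=S$: for any phase gate $P_{q'}$, the sequence \emph{transversal CNOT between two blocks, $P_{q'}$ on qubit $i$ of the target, transversal CNOT, $P_{q'}^{-1}$ on qubit $i$ of the target} produces a controlled-$P_{q'-1}$ between the matching logical qubits $i$ of the two blocks. This relies only on the same algebraic identity $X P_{q'} X P_{q'}^\dagger \propto P_{q'-1}^{-1}$ that drives \cref{lem:transversal_localise}, lifted from a single-qubit statement to a two-block one by the CSS-universal inter-block CNOT; any residual local phase on the control can be cancelled by a further localised $P_{q'-1}$, which we already have access to.

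Given this generalisation, both halves of the corollary follow by composition. For the ancilla-free variant, I first apply \cref{lem:transversal_localise} to the transversal $P_{L,q}^w$ to obtain an individual $P_{L,q-1}^{w_i e_i}$ on a chosen logical qubit $i$. Feeding this $P_{q-1}$ gate into the generalised sandwich then yields an individual controlled-$P_{q-2}$ between the $i$-th logical qubits of two data blocks, using no resources beyond the CNOT-transversality every CSS code enjoys for free.

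For the ancilla-assisted variant, I instead invoke \cref{lem:transversal_with_ancilla} to upgrade transversal $P_{L,q}^w$ into an individual $P_{L,q}^{w_i e_i}$ at the cost of one additional code block. Plugging this individual $P_q$ gate into the same sandwich now delivers a controlled-$P_{q-1}$ between the matching logical qubits $i$ of two data blocks.

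The only step requiring genuine care is verifying that \cref{fig:cphase_localise} really does extend from $P=S$ to an arbitrary $P_{q'}$, and that the bookkeeping of residual local phases closes (i.e.\ that whatever unwanted single-qubit phase accumulates on the control lies within the set of gates we already have, so that it can be corrected in constant depth). This is a short $4\times 4$ basis-state computation, and once discharged the corollary is a purely mechanical composition of \cref{lem:transversal_localise,lem:transversal_with_ancilla,lem:cphase}.
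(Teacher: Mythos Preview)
Your approach is correct and matches the paper's: the corollary is stated there without further proof because it is the immediate composition of \cref{lem:transversal_localise} (respectively \cref{lem:transversal_with_ancilla}) with the sandwich of \cref{fig:cphase_localise}, which is already drawn for a generic $P$ rather than $S$ specifically. One small slip to tidy: the residual single-qubit phase that the sandwich leaves on the control is $P_{q'}^{\pm 1}$, not $P_{q'-1}$; but since a localised $P_{q'}$ is precisely the ingredient you fed into the sandwich, the correction is still in hand and the argument closes.
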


There is additional potential to create multi-control gates \cite{koutsioumpas2022}, where each additional control will generally cost an additional power of two in the phase, but we will not examine these further here.


\subsection{Intra-Block controlled-Phase with Ancilla}

\begin{figure}
\centering
\begin{quantikz}
\lstick{$\ket{\psi}_{B,i}$} & \ctrl{2} & && \gate{Z}& \\
\lstick{$\ket{\phi}_{B,j}$} &&\ctrl{2}&\gate{Z}&& \\
\lstick[2]{$\frac{1}{\sqrt{2}}(\ket{0+}+\ket{1-})_{(A,i),(A,j)}$} & \targ{} &&\meter{}\wire[u]{c} \\
&&\targ{}&&\meter{}\wire[u][3]{c}
\end{quantikz}
\caption{Circuit for performing controlled-phase between qubits $i$ and $j$ of block $B$, using an ancilla block $A$ prepared and measured in stabilizer states. The controlled-\textsc{not} is the transversal gate available to all CSS codes.}\label{fig:intra_cp}
\end{figure}
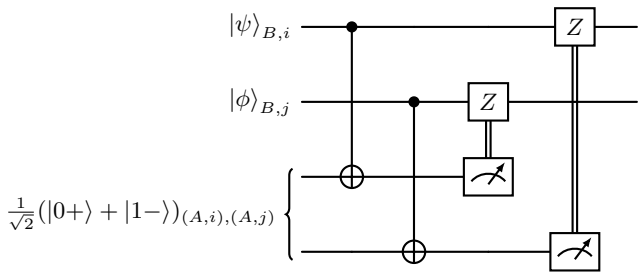

So far, the controlled-phases that we have implemented only act between equivalently labelled qubits within two different blocks, say the logical $i$ of each block. This is not enough for universality -- we need to make differently labelled qubits, say $i$ and $j$, interact as well. We will do this by preparing different stabilizer states within an ancilla block, via measurement of stabilizers. This takes us beyond the standard concept of transversality, and is likely to impact fault-tolerant proofs differently. That said, if we can make codes with sufficiently large distance, the impact should be negligible as correlated pair-wise errors should be irrelevant.

Let $\ket{\psi}$ and $\ket{\phi}$ be different logical qubits, $i$ and $j$, of one block $B$. Consider the matching two qubits $i$ and $j$ of an ancilla block $A$, and prepare them in the two-qubit cluster state \cite{raussendorf2001} (eigenstates of $X\otimes Z$ and $Z\otimes X$). The circuit of \cref{fig:intra_cp} directly implements controlled-phase between $\ket{\psi}$ and $\ket{\phi}$. All other qubits in the ancilla block should be prepared in $\ket{+}$ so that the transversal controlled-\textsc{not} has no effect.

\subsection{Intra-Block controlled-Phase without Ancilla}\label{sec:intra_phase}

While it will be useful to apply controlled-phase gates between two distinct blocks, as promised by \cref{sec:cPhase_inter}, it would be even more efficient to apply such gates between logical qubits within a single block, without an ancilla, as depicted in \cref{fig:inter_intra}.

Consider a system that has logical transversal $T$ (and hence individual $S$ gates). Let us further assume that our system has some symmetry property such that
$$
S_LH_XS_R=H_X.
$$
This means that for every generator $r\in H_X$, then $rS_R\in H_X$. Let us further assume that there's a pair of logical $X$ operators $l_1,l_2$ such that $l_2=S_Rl_1$ and $l_1\cdot l_2=0$. What happens if we apply a set of $|l_1|$ controlled-phase gates between pairs of qubits $(i,j)$ such that $(l_1)_i=1$ and $(l_2)_j=1$? First, let us note that we can essentially ignore all the other logical operators by \cref{assum:unsupported_logicals} as they are unaffected by the controlled-phase gates.

The all-0 logical state is a superposition of states $r$. Each $r$ gains a phase
$
\pi|S_Rr\cdot r\cdot l_2|.
$
If $r\in H_X$, then $y=S_Rr\in H_X$. Since the code possesses transversal $T$, we already have that terms $|H_X\cdot H_X\cdot L_X|=0\text{ mod }2$. Hence, all such terms acquire no phase. By the same token, terms such as $r\oplus l_1$ and $r\oplus l_2$ acquire no phase. On the other hand, a term $r\oplus l_1\oplus l_2$ acquires the phase
$$
|(S_R(r\oplus l_1\oplus l_2))\cdot (r\oplus l_1\oplus l_2)\cdot l_2|=|l_2|
$$
given that $|H_X\cdot H_X\cdot L_X|=0\text{ mod }2$. Hence, we have implemented logical controlled-phase if $|l_2|$ is odd.

\begin{example}Let $H$ be the $X$-parity check of an $[[n,1,d]]$ code with transversal $S$, and assume it has a logical $X$ operator $X_l$. We take two copies of this,
$$
H_X=\begin{bmatrix} H & 0 \\ 0 & H \end{bmatrix}.
$$
This is a $[[2n,2,d]]$ code with logical operators $\begin{bmatrix} l & 0 \end{bmatrix}$ and $\begin{bmatrix} 0 & l \end{bmatrix}$. There is a symmetry operator
$$
S_R=\begin{bmatrix} 0 & \identity \\ \identity & 0 \end{bmatrix}.
$$
($S_L$ is the same, just the $\identity$ matrices are of a different dimension.) We see that $S_R$ maps between the two logical states, and preserves the logical space. These cases all have transversal controlled-$Z$ between the pairs of qubits. This is equivalent to \cref{lem:cphase} for inter-block operations, except that this applies intra-block.
\end{example}

\noindent This directly generalises:

\begin{lemma}
If a code satisfies an appropriate symmetry condition $S_LH_XS_R=H_X$, it has transversal logical $P_q$, and two logical operators $l_1,l_2$ such that $l_1\cdot l_2=0$ and $S_Rl_1=l_2$, then it also has logical controlled-$P_{q-2}$ between logical qubits $l_1$ and $l_2$.
\end{lemma}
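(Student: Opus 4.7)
The plan is to lift the direct calculation of \cref{sec:intra_phase} --- which handled the case of transversal $T$ (so $q=3$) producing intra-block logical CZ --- to arbitrary $q$. At the physical level, I would apply controlled-$P_{q-2}$ gates between each qubit $i$ in the support of $l_1$ and its $S_R$-image in the support of $l_2$, with weights read off from the transversal pattern $p$ that implements $P_{L,q}^w=P_q^p$. The hypothesis $l_1\cdot l_2=0$ ensures that the two ends of each pair are distinct physical qubits, so the physical operation is a clean tensor product of two-qubit phase gates.

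The heart of the argument is a phase audit on an arbitrary codeword. A generic computational basis element of $\ket{vL_X\oplus uH_X}$ is $\bigoplus_i r_i$ with each $r_i$ either a row of $H_X$ or one of the logical operators. After expansion via \cref{eq:opls_matrix_id}, every cross-term picks up a phase proportional to $|S_R(x)\cdot x\cdot l_2|/2^{q-2}$ for some $\oplus$-combination $x$. The symmetry $S_LH_XS_R=H_X$ sends each row of $H_X$ into the row span of $H_X$, so any such term that does not already involve both $l_1$ and $l_2$ is controlled by a weight of the form $|H_X^{\cdot 2}\cdot L_X^{\cdot j}\cdot p|$. The transversal $P_q$ hypothesis, via \cref{lem:transversal} at $i+j\leq q$, zeroes each of these contributions modulo $2^{q-2}$.

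The only surviving phase comes from the term $r\oplus l_1\oplus l_2$, which, after using $l_1\cdot l_2=0$ and \cref{eq:logical_weights}, reduces to $2\pi w/2^{q-2}$ acting only on the $\ket{11}_L$ component of the two logical qubits selected by $l_1$ and $l_2$. That is exactly logical controlled-$P_{q-2}$, as required.

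The main obstacle will be in faithfully tracking the weights $p_i$ through the $S_R$-pairings when $S_R$ is not simply a permutation but maps individual rows of $H_X$ to linear combinations of rows. A related subtlety, already visible in the example, is the mod-$2$ parity condition $|l_2|$ odd that was needed for the $q=3$ case; the general statement will need an analogous mod-$2^{q-2}$ compatibility between $p$ restricted to $l_2$ and the target weight $w$, and verifying that \cref{eq:logical_weights} supplies this automatically is the key technical point.
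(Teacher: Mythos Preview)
Your plan is exactly the paper's: its proof of this lemma is the single line ``This directly generalises,'' pointing back at the explicit $q=3$ phase audit in \cref{sec:intra_phase}, and you have correctly identified that the controlling quantities are the weights $|H_X^{\cdot i}\cdot L_X^{\cdot j}\cdot p|$ furnished by \cref{lem:transversal}. On your flagged obstacles: in the paper's setup $S_R$ is always a column permutation (so rows of $H_X$ are sent to rows, never to linear combinations, and the $p$-tracking issue does not arise), and the required parity of the surviving phase on $\ket{11}_L$ is not supplied by \cref{eq:logical_weights} alone but by the hypothesis that the transversal gate is genuinely logical $P_q$ --- i.e.\ that the relevant entry of $w$ is odd --- which is the direct analogue of the ``$|l_2|$ odd'' condition you spotted in the $q=3$ case.
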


We note that the change in $q$ value makes a lot of sense. We lose one for converting transversal to local, and we lose another one converting from one-qubit to two-qubit. Again, this can generalise to multiple controls, with each control further diminishing the $q$ value.


We have now proven that if we can find a single gate $P^w_{L,q}$ where every element of $w$ is odd, we can implement a wide range of phase and controlled-phase gates on a single code block. The only supplement that we would need for a universal set of gates is Hadamard.

\subsection{Hadamard Using Measured Ancilla}

In a code with individual logical controlled-phase gates available, we can use stabilizer measurements to prepare $\ket{+}$ logical states, and perform logical $X$ measurements. We can compose these to create a Hadamard gate, inspired by measurement-based quantum computation \cite{raussendorf2001}:
\[
\begin{quantikz}
\lstick{\ket{\psi}} & \ctrl{1} & \meter{X}\wire[d]{c} \\
\lstick{\ket{+}} & \ctrl{0} & \gate{X} & \rstick{$H\ket{\psi}$}
\end{quantikz}
\]
The challenge with this case is that if we're using a separate ancilla block, the Hadamarded qubits are left stranded on a different block. This can be corrected with additional overhead.


\section{qLPDC Code Constructions}

There are increasingly many code constructions available now for qLDPC codes, returning both large numbers of logical qubits and high distance. We choose two relatively straightforward cases, the balanced product and hypergraph product, with which we will demonstrate our constructions in this paper, give explicit descriptions of the logical Pauli operators etc. Both are based on the combination of an expander graph with a local code to form a Tanner code. Essentially any code construction that invokes it (or can choose to invoke it) should have an equivalent manipulation.

The details of expander graphs are unimportant to our present purposes; simply, they provide us with binary matrices with high distance and a large code space. Our main priority is to leave them untouched -- they are sufficiently complex to produce that imposing the extra constraints of \cref{lem:transversal} is too daunting (and in the case of the incidence matrix of a regular graph, cannot be imposed). Instead, we aim to see how properties of the small, local code might imbue the entire code with the desirable properties. Typically, the standard constructions rely upon an expander graph which is an $s$-regular graph, from which we use the incidence matrix. Nothing that we do here requires this, however, and it would be equally valid to utilise the biadjacency matrix of a biregular graph where one of the degrees is $s$.

The key challenge to designing a quantum code is ensuring that all the check operators commute. The first step in doing so is to use a CSS code, such that we separate out the $X$-type parity-check operators from the $Z$-type parity-checks. These are described in binary matrices $H_X$ and $H_Z$ respectively. All $X$-types commute with each other and all $Z$-types commute with each other. Cross-commutation is ensured by finding a case where
$$
H_X\cdot H_Z^T=0\text{ mod }2.
$$
\cref{sec:transversal} conveyed that the explicit form of the parity check operators is essential in identifying transversal gates, so we consider two specific cases in this paper.

\subsection{Hypergraph Product Codes}

The hypergraph product \cite{tillich2014} ensures this commutation. We select two binary matrices (perhaps good expanders) $A$ and $B$, building the parity checks as
\begin{align*}
H_X&=\begin{bmatrix} A\otimes\identity & \identity\otimes B^T \end{bmatrix} \\
H_Z&=\begin{bmatrix} \identity\otimes B & A^T\otimes\identity \end{bmatrix}.
\end{align*}
If $B^T$ and $A^T$ are full rank, then
$$
K=(n_A-m_A)(n_B-m_B)
$$
counts the number of logical qubits, where $A\in\{0,1\}^{m_A\times n_A}$ and $B\in\{0,1\}^{m_B\times n_B}$, while the number of physical qubits involved is $N=n_An_B+m_Am_B$.

We will be considering cases specifically where $A^T$ is a Tanner code combining a local code with an expander graph. This is not a standard part of the construction, but we are free to impose it.

The value of an expander graph can be seen in the distance of the code. The matrix $A$ is said to be an $(\alpha,\beta)$-expander if for all $x$ of weight $|x|\leq\alpha n_A$, $|Ax|\geq\beta|x|$ \footnote{Sometimes the definition of $A$ being expanding can be given as both $A$ and $A^T$ being expanding, in terms of the definition we have just given.}. In particular, if we tried some logical operator of the form
$$
l=\begin{bmatrix} x\otimes e \\ 0 \end{bmatrix},
$$
where $e$ is a standard basis vector ($|e|=1$), then this could only be a logical operator if $x$ is non-trivial and $Ax=0$. The logical operator must have a weight $|x|>\alpha n_A$. While this argument alone is insufficient to prove the distance, as there are other representations of the same operator when dressed by stabilizers, it is the foundation of such analyses \cite{panteleev2022a,leitch2025}.

\subsection{Balanced Product Codes}

The balanced product code \cite{breuckmann2021,leitch2025} is another structure that ensures the commutation property of the stabilizers, but it uses a symmetry property of the matrix $A$ to avoid the need for a tensor product (and the associated cost in qubit numbers). In particular, let $A$ be a matrix with some symmetry such that $RA=AC^T$ where the symmetry operators $R$ and $C$ both have all orbits with length exactly $l$. We thus have the properties that $R^l=\identity$ and $C^l=\identity$, which we use to define the parity check matrices
\begin{align*}
H_X&=\begin{bmatrix} A^T & \identity+C \end{bmatrix} \\
H_Z&=\begin{bmatrix} \identity+R & A \end{bmatrix}.
\end{align*}
Originally \cite{breuckmann2021} the balanced product code incorporated the Tanner code construction. While not necessary \cite{leitch2025}, we will use it here. In that case, we require a further assumption that the edges of the graph determining $A$ do not connect vertices in the same orbit.

Again $A$ has large distance due to its expansion properties, while the other parts, $\identity+C$ and $\identity+R$, describe repetition codes with distance $l$.

\subsection{Tanner Codes}

A Tanner code \cite{tanner1981} is formed by combining a graph code with a local code, and is often used to boost the distance of the code. Imagine that we have a binary matrix where every row has weight $s$. In the low density context, $s$ is constant, while the number of rows/columns scales. Strategies for designing appropriate matrices could include using the incidence matrix of an $s$ regular graph or the biadjacency matrix of a bipartite graph. Call this $I_0$. We also take a local code $C_0$ whose parity-check comprises $r$ rows and $s$ columns.
\begin{example}
Consider the 3-regular graph
\begin{center}
\begin{tikzpicture}[scale=0.8]
\draw [thick] (0,0) -- (0,-2) (2,0) -- (2,-2) (-1,-1)--(3,-1);
\draw [thick] (0,0) node [circle,fill=black,text=white] {1} -- (2,0) node [circle,fill=black,text=white] {2} -- (3,-1) node [circle,fill=black,text=white] {3} -- (2,-2) node [circle,fill=black,text=white] {4} -- (0,-2) node [circle,fill=black,text=white] {5} -- (-1,-1) node [circle,fill=black,text=white] {6} -- cycle;
\end{tikzpicture}
\end{center}
This has an incidence matrix
$$
I_0=\begin{bmatrix}
1 & 0 & 0 & 0 & 0 & 1 & 1 & 0 & 0 \\
1 & 1 & 0 & 0 & 0 & 0 & 0 & 1 & 0 \\
0 & 1 & 1 & 0 & 0 & 0 & 0 & 0 & 1 \\
0 & 0 & 1 & 1 & 0 & 0 & 0 & 1 & 0 \\ 
0 & 0 & 0 & 1 & 1 & 0 & 1 & 0 & 0 \\ 
0 & 0 & 0 & 0 & 1 & 1 & 0 & 0 & 1
\end{bmatrix}.
$$
Since each row has weight 3, we select a local code of 3 bits, such as the repetition code
$$
C_0=\begin{bmatrix} 1 & 1 & 0 \\ 0 & 1 & 1 \end{bmatrix}.
$$
\end{example}
The new adjacency matrix $A$ is formed by taking $r$ copies of each row of $I_0$. This then includes an $r\times s$ subset which is all-ones (although it may not be in a contiguous block). Replace these $1$s with a copy of $C_0$. The columns of $C_0$ may appear in any order, and there is much freedom in this choice which can influence the code properties. Our analysis will cover all such eventualities.
\begin{example}
To construct the Tanner code, we repeat each row in $I_0$ and replace each $2\times 3$ block with $C_0$ (possibly with permuted columns).
\begin{equation}\label{eq:colourful}A=
\begin{tikzpicture}[baseline=(m-6-1.south)]
\matrix (m) [matrix of math nodes,left delimiter={[},right delimiter={]}] {
1 & 0 & 0 & 0 & 0 & 1 & 0 & 0 & 0 \\
0 & 0 & 0 & 0 & 0 & 1 & 1 & 0 & 0 \\
1 & 0 & 0 & 0 & 0 & 0 & 0 & 1 & 0 \\
1 & 1 & 0 & 0 & 0 & 0 & 0 & 0 & 0 \\
0 & 0 & 1 & 0 & 0 & 0 & 0 & 0 & 1 \\
0 & 1 & 1 & 0 & 0 & 0 & 0 & 0 & 0 \\
0 & 0 & 1 & 1 & 0 & 0 & 0 & 0 & 0 \\ 
0 & 0 & 0 & 1 & 0 & 0 & 0 & 1 & 0 \\
0 & 0 & 0 & 1 & 0 & 0 & 1 & 0 & 0 \\ 
0 & 0 & 0 & 0 & 1 & 0 & 1 & 0 & 0 \\ 
0 & 0 & 0 & 0 & 0 & 1 & 0 & 0 & 1 \\
0 & 0 & 0 & 0 & 1 & 1 & 0 & 0 & 0 \\
};
\begin{scope}[on background layer]
\draw [draw=red,line width=0.5cm,opacity=0.5,line cap=round] (m-1-1.center)--(m-2-1.center) (m-1-6.center) -- (m-2-6.center) (m-1-7.center)--(m-2-7.center);
\draw [draw=green,line width=0.5cm,opacity=0.5,line cap=round] (m-3-1.center)--(m-4-1.center) (m-3-2.center) -- (m-4-2.center) (m-3-8.center)--(m-4-8.center);
\draw [draw=blue,line width=0.5cm,opacity=0.5,line cap=round] (m-5-2.center)--(m-6-2.center) (m-5-3.center) -- (m-6-3.center) (m-5-9.center)--(m-6-9.center);
\draw [draw=yellow,line width=0.5cm,opacity=0.5,line cap=round] (m-7-4.center)--(m-8-4.center) (m-7-3.center) -- (m-8-3.center) (m-7-8.center)--(m-8-8.center);
\draw [draw=purple,line width=0.5cm,opacity=0.5,line cap=round] (m-9-4.center)--(m-10-4.center) (m-9-5.center) -- (m-10-5.center) (m-9-7.center)--(m-10-7.center);
\draw [draw=orange,line width=0.5cm,opacity=0.5,line cap=round] (m-11-5.center)--(m-12-5.center) (m-11-6.center) -- (m-12-6.center) (m-11-9.center)--(m-12-9.center);
\end{scope}
\end{tikzpicture}
\end{equation}
\end{example}
If the initial graph code has a symmetry $R_0I_0=I_0C^T$, then we can choose the columns in such a way that $R=R_0\otimes\identity_r$ and the Tanner code also has a symmetry $RA=C^TA$. This is the specific column ordering that we will use in the case of the balanced product code.


It has been proven \cite{panteleev2022} that if $I_0$ is an expander, then $A$ is also an expander. Moreover, if $C_0$ is full rank, $A^T$ is also an expander. We note, however, that if $C_0$ is not full rank, there are vectors $l$ such that $C_0^Tl=0$. In turn, this means that for every unit vector $v$ (corresponding to a vertex in the original graph), picking out a row in $I_0$, $v\otimes l$ is a null vector of $A^T$. This potentially limits the distance of $A^T$ to being $\leq d(C_0^T)<s$ \footnote{We have not, at this point, proven that these indeed correspond to logical operators of the quantum code, not yet having introduced a quantum code. They could turn out just to be stabilizers and of no concern, but we will see this is not they case.}. This, it will transpire, will be our primary limitation as we will need to introduce such degeneracy for the purposes of facilitating the transversal gates (and is what induces the limitation to \cref{assum:unsupported_logicals}).

\section{Transversal Phase Gates on Transpose Tanner Codes}\label{sec:balanced_transverse}

We will now study how the local code of a Tanner code can enable transversal gates in qLDPC codes. Critically, we need our code to satisfy the conditions of \cref{lem:transversal}, most crucially that the weights of dot products of rows should all be 0 modulo an appropriate power of 2.

Imagine we took $A$ to be a Tanner code in which we started from the incidence matrix $I_0$ of an $s$-regular graph and combined it with a local code $C_0$. If this $A$ were directly part of $H_X$, then when we start looking at $|H_X|$, and indeed certain terms of the $|H_X^{\cdot i}|$ (where we combine rows of $H_X$ comprising elements from the same row of $I_0$ and different rows of $C_0$), we instantly recognise that the rows of $C_0$ must have the same properties, specifically
$$
|C_0^{\cdot i}|=0\text{ mod }2^{q+1-i}.
$$ 
\begin{example}
In \cref{eq:colourful}, if we examine a pair of rows with the same shading, such as
$$
\begin{tikzpicture}
\matrix (m) [matrix of math nodes,left delimiter={[},right delimiter={]}] {
1 & 0 & 0 & 0 & 0 & 1 & 0 & 0 & 0 \\
0 & 0 & 0 & 0 & 0 & 1 & 1 & 0 & 0 \\
};
\begin{scope}[on background layer]
\draw [draw=red,line width=0.5cm,opacity=0.5,line cap=round] (m-1-1.center)--(m-2-1.center) (m-1-6.center) -- (m-2-6.center) (m-1-7.center)--(m-2-7.center);
\end{scope}
\end{tikzpicture}
$$
then everything is 0 outside the region of the local code. Hence if the weight of the dot product of the two rows is to be 0, so must the weight of the dot product of the rows of the local code (which is not true in this case).
\end{example}
In principle, this is easy to achieve, selecting a code such as a 7-bit Hamming (transversal $S$) or Reed-Muller 15-bit (transversal $T$) code. Unfortunately, this does not help when we come to evaluating other parts of $|H_X\cdot H_X|$. Since we are using the incidence matrix of a regular graph, we can consider two vertices (i.e.\ rows of $I_0$) which are joined by an edge. Since they are only joined by 1 edge, the rows of $I_0$ coincide with a 1 element in exactly one column. The corresponding columns of $C_0$ must be non-trivial. Hence, there will always be a choice of row of $H_X$ where the inner product is 1. Either the code cannot be used in this way, or we must set the element of $p$ corresponding to that column to 0. Inevitably, this leads to $p=0$, which is useless. (This is also the explanation for why we cannot just directly use $I_0$, and are attempting to use the Tanner code construction.)

\begin{example}
Returning to \cref{eq:colourful}, we pick an edge of the graph, i.e.\ column of the matrix, such as the first. This edge connects two vertices. For each, there is a set of $r$ rows corresponding to the local code. Pick a row that has a 1 in our chosen column.
$$
\begin{tikzpicture}
\matrix (m) [matrix of math nodes,left delimiter={[},right delimiter={]}] {
1 & 0 & 0 & 0 & 0 & 1 & 0 & 0 & 0 \\
1 & 0 & 0 & 0 & 0 & 0 & 0 & 1 & 0 \\
};
\begin{scope}[on background layer]
\draw [draw=red,line width=0.5cm,opacity=0.5,line cap=round] (m-1-1.center)--(m-1-1.center) (m-1-6.center) -- (m-1-6.center) (m-1-7.center)--(m-1-7.center);
\draw [draw=green,line width=0.5cm,opacity=0.5,line cap=round] (m-2-1.center)--(m-2-1.center) (m-2-2.center) -- (m-2-2.center) (m-2-8.center)--(m-2-8.center);
\end{scope}
\end{tikzpicture}
$$
This is the only column where the two blocks of local code overlap, so the inner product has weight 1. This column cannot be in $p$, the set of qubits on which the transversal gates are applied.
\end{example}

So, we should not use the parity check $A$ of the Tanner code directly in $H_X$. However, the choice of $H_X$ and $H_Z$ is largely arbitrary, and the other will contain $A^T$. What happens if we use an $H_X$ based on $A^T$ (which we call a Transpose Tanner Code)? We choose $p$ to focus entirely on the part of $H_X$ that involves $A^T$, i.e.\ just the left-hand terms, $p=[11\ldots100\ldots 0]$.
\begin{example}
$$
A^T=\begin{tikzpicture}[baseline=(m-5-1.center)]
\matrix (m) [matrix of math nodes,left delimiter={[},right delimiter={]}] {
1 & 0 & 1 & 1 & 0 & 0 & 0 & 0 & 0 & 0 & 0 & 0 \\
 0 & 0 & 0 & 1 & 0 & 1 & 0 & 0 & 0 & 0 & 0 & 0 \\
 0 & 0 & 0 & 0 & 1 & 1 & 1 & 0 & 0 & 0 & 0 & 0 \\
 0 & 0 & 0 & 0 & 0 & 0 & 1 & 1 & 1 & 0 & 0 & 0 \\
 0 & 0 & 0 & 0 & 0 & 0 & 0 & 0 & 0 & 1 & 0 & 1 \\
 1 & 1 & 0 & 0 & 0 & 0 & 0 & 0 & 0 & 0 & 1 & 1 \\
 0 & 1 & 0 & 0 & 0 & 0 & 0 & 0 & 1 & 1 & 0 & 0 \\
 0 & 0 & 1 & 0 & 0 & 0 & 0 & 1 & 0 & 0 & 0 & 0 \\
 0 & 0 & 0 & 0 & 1 & 0 & 0 & 0 & 0 & 0 & 1 & 0 \\
};
\begin{scope}[on background layer]
\draw [draw=red,line width=0.5cm,opacity=0.5,line cap=round] (m-1-1.center)--(m-1-2.center) (m-6-1.center) -- (m-6-2.center) (m-7-1.center)--(m-7-2.center);
\draw [draw=green,line width=0.5cm,opacity=0.5,line cap=round] (m-1-3.center)--(m-1-4.center) (m-2-3.center) -- (m-2-4.center) (m-8-3.center)--(m-8-4.center);
\draw [draw=blue,line width=0.5cm,opacity=0.5,line cap=round] (m-2-5.center)--(m-2-6.center) (m-3-5.center) -- (m-3-6.center) (m-9-5.center)--(m-9-6.center);
\draw [draw=yellow,line width=0.5cm,opacity=0.5,line cap=round] (m-4-7.center)--(m-4-8.center) (m-3-7.center) -- (m-3-8.center) (m-8-7.center)--(m-8-8.center);
\draw [draw=purple,line width=0.5cm,opacity=0.5,line cap=round] (m-4-9.center)--(m-4-10.center) (m-5-9.center) -- (m-5-10.center) (m-7-9.center)--(m-7-10.center);
\draw [draw=orange,line width=0.5cm,opacity=0.5,line cap=round] (m-5-11.center)--(m-5-12.center) (m-6-11.center) -- (m-6-12.center) (m-9-11.center)--(m-9-12.center);
\end{scope}
\end{tikzpicture}
$$
Pick any two rows. Non-trivial intersections occur between at most 1 block of $r$ columns, which are two different columns of $C_0$, such as
$$
\begin{tikzpicture}
\matrix (m) [matrix of math nodes,left delimiter={[},right delimiter={]}] {
1 & 0 & 1 & 1 & 0 & 0 & 0 & 0 & 0 & 0 & 0 & 0 \\
 1 & 1 & 0 & 0 & 0 & 0 & 0 & 0 & 0 & 0 & 1 & 1 \\
};
\begin{scope}[on background layer]
\draw [draw=red,line width=0.5cm,opacity=0.5,line cap=round] (m-1-1.center)--(m-1-2.center) (m-2-1.center) -- (m-2-2.center);
\draw [draw=green,line width=0.5cm,opacity=0.5,line cap=round] (m-1-3.center)--(m-1-4.center);
\draw [draw=orange,line width=0.5cm,opacity=0.5,line cap=round] (m-2-11.center) -- (m-2-12.center);
\end{scope}
\end{tikzpicture}
$$
If $C_0^T$ satisfies \cref{lem:transversal}, then so does $A^T$ (it does not in this example).
\end{example}
Now the rows of $A^T$ comprise blocks of length $r$ which are rows of $C_0^T$. But these blocks are consistent between the different rows. In particular, let us pick two edges of the graph (rows of $A^T$). If the two edges do not have a vertex in common, the inner product of the rows is 0. If the two edges have a vertex in common, their value of $|H_X\cdot H_X|$ is just equal to the inner product $|C_0^T\cdot C_0^T|$ between two rows of $C_0^T$. Thus, if it were the case that
$$
|{C_0^T}^{\cdot i}|=0\text{ mod }2^{q+1-i},
$$
the transpose Tanner code would satisfy all the properties required for a transversal $P_q$ gate. Clearly, we want to set $C_0$ to be the transpose of a 7-bit Hamming or 15-bit Reed-Muller (for example).
\begin{example}
A symmetrised version of the Hamming code is still $[7,4,3]$, and so is its transpose since $C_0=C_0^T$.
$$
C_0=\begin{bmatrix}
1 & 0 & 0 & 1 & 1 & 0 & 1 \\
0 & 1 & 0 & 1 & 0 & 1 & 1 \\
0 & 0 & 1 & 0 & 1 & 1 & 1 \\
1 & 1 & 0 & 0 & 1 & 1 & 0 \\
1 & 0 & 1 & 1 & 0 & 1 & 0 \\
0 & 1 & 1 & 1 & 1 & 0 & 0 \\
1 & 1 & 1 & 0 & 0 & 0 & 1
\end{bmatrix}
$$ 
\end{example}
If we construct $C_0$ just by adding linearly dependent rows to a code that we've previously analysed, then all this does to $H_Z$ is add linearly dependent rows. This does not change the of distance of that code. Hence, the $X$ distance does not change, and we just inherit the robustness properties of the original constructions.

The linear dependence does alter the properties of $H_X$. This is where the troubles arise. We have imposed that $|C_0|=0\text{ mod }2^q$, which is equivalent to
$$
C_0\begin{bmatrix} 1 \\ 1 \\ \vdots \\ 1\end{bmatrix}=0.
$$
Hence, the matrix must have a null vector. This kills the code distance -- $A^T$ is not expanding, and the distance against $Z$ errors is limited it to the distance of $C_0^T$. Nevertheless, we are willing to accept this just to get some non-zero distance against $Z$ errors (and large distance against $X$ errors) as this will be the first class of qLDPC codes with transversal gates.

At this stage, we have ensured that \cref{eq:stabilizer_weights} is satisfied for $j=0$. To determine if the $j\geq 1$ cases are satisfied, we need to know the form of the logical operators. Let $v$ be the length-$s$ all-ones vector. As already argued, $C_0^Tv=0$. (There may be more null vectors, but we won't make use of them here. We are using a subsystem code.) We can use this to construct logical operators of the form
$$
l_Z^{\text{hyper}}=\begin{bmatrix} (e_I\otimes v)\otimes e_B \\ 0 \end{bmatrix},\qquad l_Z^{\text{balanced}}=\begin{bmatrix} e_I\otimes v \\ 0 \end{bmatrix}
$$
where $e_I$ and $e_B$ are unit vectors of appropriate dimensions (corresponding to systems upon which $I_0$ and $B$, or their transposes, act). The $Z$ logicals are not directly relevant to us, but we need them to prove that we indeed have (distinct) logical qubits.

The corresponding logical $X$s are
\begin{equation}\label{eq:logXtrans}
l_X^{\text{hyper}}=\begin{bmatrix} e_I\otimes v\otimes u \\ 0 \end{bmatrix},\qquad l_X^{\text{balanced}}=\begin{bmatrix} \sum_{i=0}^{l-1}R_0^ie_I\otimes v \\ 0 \end{bmatrix}
\end{equation}
For the hypergraph, $u$ is a null vector of $B$, and $e_B$ is a standard basis vector satisfying $u^Te_B=1$. Note that, in general, the different $u$ will act on some of the same sites, so \cref{assum:unsupported_logicals} is not satisfied. (It is still useful for examples, such as where $B$ has a single null vector, but this is an uninteresting regime for application).

For the balanced product, we see that if each $e_I$ is a standard basis vector chosen to be the representative of a different orbit, then each logical operator acts on a distinct set of qubits, satisfying \cref{assum:unsupported_logicals} and, indeed, saturating the corresponding $d_Xk\leq n$ bound. A graph of $m$ vertices yields $m/l$ distinct orbits, and hence $m/l$ different choices of $e_I$ while the number of physical qubits involved is $3ms/2$. Thus, the number of logical qubits is $\order{N/l}$. Each $l_X$ has a weight $sl$, so provided $s$ and $l$ are both odd, the $(i,j)=(0,1)$ case of \cref{lem:transversal} is satisfied. The only cases that we are yet to consider are $j=1,i\geq 1$. Note, however, the $\otimes v$ structure in $l_X$. This means that the logical operator has an all-ones block corresponding to a run of columns that matches up with some $C_0^T$ block. Hence $H_X^{\cdot i}\cdot L_X$ just evaluates $H_X^{\cdot i}$ on a subset of blocks, with $C_0$ ensuring that all the conditions are satisfied for individual blocks.

\begin{example}
Consider the cycle graph of 10 vertices, where each vertex also has an edge to the one directly opposite it. Its $10\times 15$ incidence matrix is
$$
I_0=\begin{bmatrix} \identity_5 & \identity_5 & \identity_5 \\ \identity_5 & Q^{-1} & Q^2 \end{bmatrix}
$$
where $Q$ is the $5\times 5$ cyclic permutation, $Q^5=\identity$. All the rows have weight 3, and there is an overall symmetry $R_0I_0=I_0C^T$ with
$$
R_0=\identity_2\otimes Q,\qquad C=\identity_3\otimes Q^{-1}.
$$
We can then introduce the local code
$$
C_0=\begin{bmatrix} 1 & 1 & 0 \\ 0 & 1 & 1 \\ 1 & 0 & 1 \end{bmatrix}
$$
where in each block-row of $I_0$, each block column is tensored with a different column of $C_0$ to produce $A$. By doing this, we get $R=R_0\otimes\identity_3$ such that $RA=AC^T$. We can now apply the balanced product construction to yield $H_X,H_Z$ on $n=15+3\times 10=45$ qubits.

This code has 5 logical qubits, of which our construction has given 2:
\begin{align*}
l_X&=\begin{bmatrix}e\otimes\begin{bmatrix} 1 & 1 & 1 & 1 & 1 \end{bmatrix}\otimes \begin{bmatrix} 1 & 1 & 1 \end{bmatrix} \\ 0 \end{bmatrix} \\
l_Z&=\begin{bmatrix}e\otimes\begin{bmatrix} 1 & 0 & 0 & 0 & 0 \end{bmatrix}\otimes \begin{bmatrix} 1 & 1 & 1 \end{bmatrix} \\ 0 \end{bmatrix}
\end{align*}
It has $d_X=5$ (the given logicals are not the lowest weight expression) and $d_Z=3$ (limited by the distance of the local code).

This example makes no attempt to present transversal gate properties, for which we need larger graphs. Nevertheless, all the rows of $H_X$, and all those generated by it, have even weight. If we look at a row of $A^T$ (ignoring the right-hand block of $H_X$, since we will set $p=0$ on that part), then
\begin{adjustbox}{max width=\textwidth}
\parbox{\linewidth}{
\begin{align*}
A^T_1&=\begin{tikzpicture}[baseline=(m-1-1).center]
\matrix (m) [matrix of math nodes,left delimiter={[},right delimiter={]},ampersand replacement=\&] {
1 \& 0 \& 1 \& 0 \& 0 \& 0 \& 0 \& 0 \& 0 \& 0 \& 0 \& 0 \& 0 \& 0 \& 0 \& 1 \& 1 \& 0 \& 0 \& \ldots \\
};
\begin{scope}[on background layer]
\draw [draw=red,line width=0.5cm,opacity=0.5,line cap=round] (m-1-1.center)--(m-1-3.center);
\draw [draw=green,line width=0.5cm,opacity=0.5,line cap=round] (m-1-16.center)--(m-1-18.center);
\end{scope}
\end{tikzpicture} \\
l_X&=\begin{tikzpicture}[baseline=(m-1-1).center]
\matrix (m) [matrix of math nodes,left delimiter={[},right delimiter={]},ampersand replacement=\&] {
1 \& 1 \& 1 \& 1 \& 1 \& 1 \& 1 \& 1 \& 1 \& 1 \& 1 \& 1 \& 1 \& 1 \& 1 \& 0 \& 0 \& 0 \& 0 \& \ldots\\
};
\begin{scope}[on background layer]
\draw [draw=red,line width=0.5cm,opacity=0.5,line cap=round] (m-1-1.center)--(m-1-3.center);
\end{scope}
\end{tikzpicture}.
\end{align*}
}
\end{adjustbox}
The $l_x$ isolates length-$r$ sub-blocks of $H_X$, but these blocks are rows of $C_0^T$, so if $C_0^T$ satisfies the transversality conditions, so does $L_X\cdot H_X$.
\end{example}

\subsection{Example with Transversal \texorpdfstring{$S$}{S}}


Balanced product codes work well for examples, generally keeping the number of qubits lower than the corresponding hypergraph product codes.

We want to find an $A$ with appropriate properties, although we will disregard the requirement of expansion properties for the sake of finding a modestly sized example of a (bi)regular graph. To this end, it is helpful to investigate the field of pairwise balanced uniform block designs. These are specified by parameters $(v,k,\lambda)$ and derived parameters $b,r$ ($bk=vr$ and $\lambda(v-1)=r(k-1)$), and will yield $v\times b$ binary matrices where each row has a weight $r$ and each column has a weight $k$, while $\lambda$ specifies that every pair of rows have 1s in a set of exactly $\lambda$ columns. So, if we take $\lambda=1$, this imposes an unnecessary but convenient additional restriction where pairs of rows only coincide in exactly one position. For our initial example, we want $r=7$, which will allow us to use the Hamming $[7,4,3]$ code for $C_0$. For simplicity, we select $k=3$. This gives us a $15\times 35$ matrix, corresponding to the $(15,3,1)$ block design, which is essentially Kirkman's schoolgirl problem. Indeed, there are several distinct solutions \cite{cole1922}, each with different symmetry properties. One can search through the automorphism groups and find instances of an $l=5$ symmetry. In \cref{kirkman}, we give one example, having reordered the vertices in order to emphasise our chosen symmetry.
 \begin{align*}
C_0&=\begin{bmatrix}
1 & 0 & 0 & 1 & 1 & 0 & 1 \\
0 & 1 & 0 & 1 & 0 & 1 & 1 \\
0 & 0 & 1 & 0 & 1 & 1 & 1 \\
1 & 1 & 0 & 0 & 1 & 1 & 0 \\
1 & 0 & 1 & 1 & 0 & 1 & 0 \\
0 & 1 & 1 & 1 & 1 & 0 & 0 \\
1 & 1 & 1 & 0 & 0 & 0 & 1
\end{bmatrix} \\
R_0&=\begin{bmatrix} 0 & 1 & 0 & 0 & 0 \\ 0 & 0 & 1 & 0 & 0 \\ 0 & 0 & 0 & 1 & 0 \\ 0 & 0 & 0 & 0 & 1 \\ 1 & 0 & 0 & 0 & 0\end{bmatrix}\otimes\identity_3 \\
C^T&=\begin{bmatrix} 0 & 1 & 0 & 0 & 0 \\ 0 & 0 & 1 & 0 & 0 \\ 0 & 0 & 0 & 1 & 0 \\ 0 & 0 & 0 & 0 & 1 \\ 1 & 0 & 0 & 0 & 0\end{bmatrix}\otimes\identity_7
.
\end{align*}
We use these to construct the $105\times 35$ matrix $A$, and from there find $H_X$ and $H_Z$. The code is $[[140,16,d_Z=3,d_X=5]]$, with row and column totals being at most 14 (arising from 3 copies of weight 4 columns from $C_0$ and 2 from the $\identity+C$ type term). Of these 16, 3 are in the form we have explicitly described. 
\begin{align*}
X^L_p&=\prod_{i=1}^{35}X_{i+35p} \\
Z^L_p&=Z_{4+35p}Z_{5+35p}Z_{6+35p}
\end{align*}
for $p\in\{0,1,2\}$ a convenient indexing of the logical operators. The logical $X$ operators all act on distinct sets of qubits. Thus, $|L_X^{\cdot i}|=0$ for all $i>1$.

Clearly, we miss out on a lot of the logical qubits, but this is the cost to ensure that we satisfy all the necessary conditions for transversal gates. The important feature is that the number of logicals still scales well with $n$.

We can verify that $|H_X\cdot p|\equiv 0\text{ mod }4$ and that $|H_X\cdot H_X\cdot p|\equiv 0\text{ mod 2}$. Upon application of $S$ on each of the 105 physical qubits of the left-hand block, each of these logical qubits have $|l\cdot p|\equiv 1\text{ mod }4$, as required for logical $S$ to be applied to them. The details of this example can be verified in \cite{kay2025c}. The code distance is obviously low, and we would use three copies of a Steane code for preference, but we can easily scale the $X$ distance. We have made no particular attempt to optimise the choice of initial graph.

\subsection{Example with Transversal \texorpdfstring{$T$}{T}}\label{sec:exampleT}

For a second example, also in the associated Mathematica notebook \cite{kay2025c}, we select the simplest 15-regular graph -- the complete graph of 16 vertices. This has an incidence matrix of $\tilde I$. While it will certainly have some symmetry, such as a permutation of the vertices, it doesn't have a symmetry suitable for the balanced product \footnote{Recall that we require a permutation of both the vertices and the edges with the same order, such that all orbits have the same length $l$. This can be quite restrictive.}. Instead, we build a new matrix with symmetry built in. In this case, take
$$
I_0=\tilde I\otimes\identity_3.
$$
In doing so, we can have a symmetry
\begin{align*}
R_0&=\identity_{16}\otimes\begin{bmatrix} 0 & 1 & 0 \\ 0 & 0 & 1 \\ 1 & 0 & 0 \end{bmatrix} \\
C&=\identity_{120}\otimes\begin{bmatrix} 0 & 0 & 1 \\ 1 & 0 & 0 \\ 0 & 1 & 0 \end{bmatrix}.
\end{align*}
This can then be combined with the 15 bit Hamming code, which we have symmetrised so that its transpose also yields the code.
$$
C_0=\begin{bmatrix}
1 & 0 & 0 & 0 & 1 & 1 & 1 & 0 & 0 & 0 & 1 & 1 & 1 & 0 & 1 \\
 0 & 1 & 0 & 0 & 1 & 0 & 0 & 1 & 1 & 0 & 1 & 1 & 0 & 1 & 1 \\
 0 & 0 & 1 & 0 & 0 & 1 & 0 & 1 & 0 & 1 & 1 & 0 & 1 & 1 & 1 \\
 0 & 0 & 0 & 1 & 0 & 0 & 1 & 0 & 1 & 1 & 0 & 1 & 1 & 1 & 1 \\
 1 & 1 & 0 & 0 & 0 & 1 & 1 & 1 & 1 & 0 & 0 & 0 & 1 & 1 & 0 \\
 1 & 0 & 1 & 0 & 1 & 0 & 1 & 1 & 0 & 1 & 0 & 1 & 0 & 1 & 0 \\
 1 & 0 & 0 & 1 & 1 & 1 & 0 & 0 & 1 & 1 & 1 & 0 & 0 & 1 & 0 \\
 0 & 1 & 1 & 0 & 1 & 1 & 0 & 0 & 1 & 1 & 0 & 1 & 1 & 0 & 0 \\
 0 & 1 & 0 & 1 & 1 & 0 & 1 & 1 & 0 & 1 & 1 & 0 & 1 & 0 & 0 \\
 0 & 0 & 1 & 1 & 0 & 1 & 1 & 1 & 1 & 0 & 1 & 1 & 0 & 0 & 0 \\
 1 & 1 & 1 & 0 & 0 & 0 & 1 & 0 & 1 & 1 & 1 & 0 & 0 & 0 & 1 \\
 1 & 1 & 0 & 1 & 0 & 1 & 0 & 1 & 0 & 1 & 0 & 1 & 0 & 0 & 1 \\
 1 & 0 & 1 & 1 & 1 & 0 & 0 & 1 & 1 & 0 & 0 & 0 & 1 & 0 & 1 \\
 0 & 1 & 1 & 1 & 1 & 1 & 1 & 0 & 0 & 0 & 0 & 0 & 0 & 1 & 1 \\
 1 & 1 & 1 & 1 & 0 & 0 & 0 & 0 & 0 & 0 & 1 & 1 & 1 & 1 & 0
 \end{bmatrix}.
$$
The resulting construction is a $[[1080,232,3]]$ code, and while all the rows generated by $H_X$, by construction, have weight $0\text{ mod }8$, most do not satisfy the $L_X\cdot H_X^{\cdot i}$ conditions. Instead, we focus on a subsystem code of logical operators that do satisfy the conditions. There are 16 of them, all of which are given by \cref{eq:logXtrans} (there are 16 orbits of length $l=3$, so we create 16 logical qubits). These satisfy all the requirements of \cref{lem:transversal} for $q=3$. Stabilizers have a weight of no more that 18, and no qubit is acted upon by more than 16 qubits. Thus, we get logical transversal $T$ and individual $S$ gates, implemented by transversal gates on physical qubits.


\subsection{Intra-Block Controlled-Phase}

In \cref{sec:intra_phase}, we proved that in the presence of suitable symmetries, it is possible to convert transversal $T$ into localised controlled-phase gates between pairs of logical qubits in the same block. Does this apply to transpose Tanner codes?

Consider a graph with incidence matrix $I_0$ (or biadjacency matrix), which has a symmetry such that $P_VI_0=I_0P_E^T$. As previously indicated, the Tanner code can be selected to respect this symmetry, so that $A$ has a symmetry $(P_V\otimes\identity)A=AP_E^T$. We can then construct a larger symmetry of the hypergraph product:
$$
S_L=P_E^T\otimes\identity,\qquad S_R=\begin{bmatrix} P_V^T\otimes\identity\otimes\identity \\ P_E \end{bmatrix}.
$$
The design of $l_X^{\text{hyper}}$ is then such that the $P_V$ permutes between the different $e_I$ basis elements; controlled-phase gates will be possible between the logical qubits specified by membership of the same orbit of $P_V$, if \cref{assum:unsupported_logicals} is satisfied. Thus, our logical qubits partition into sets of qubits which can mutually interact, but cannot interact between the sets. Of course, if $I_0$ derives from a random graph, chances are it won't have any symmetry. However, there are families with symmetries, such as the LPS expanders \cite{lubotzky1988}, which \cite{breuckmann2021} conveys, for an $n$ vertex graph, has orbits of length $\order{n^{1/3}}$.

The balanced product code is a little different. By construction, it already has a symmetry. However, the $l_X^{\text{balanced}}$ are invariant under the action of that symmetry. Nevertheless, if we could find another permutation of the vertices $P_V$, mapping between the orbits of $R_0$, and if $[P_E,C]=0$ then there is a suitable symmetry
$$
S_L=P_E^T,\qquad S_R=\begin{bmatrix} P_V^T\otimes\identity \\ P_E \end{bmatrix}
$$
where $P_E$ maps between the different elements of $l_X^{\text{balanced}}$ by assumption. Again, the orbits of $P_E$ group the logical qubits into sets within which controlled-phase is possible.

\begin{example}
In the instance constructed in \cref{sec:exampleT}, consider a permutation $\tilde P_L$ that cycles through all the vertices of the original graph. There is certainly a corresponding permutation of the edges such that $\tilde P_L\tilde I=\tilde I\tilde P_R^T$. Then $I_0$ has symmetry operators $P_L=\tilde P_L\otimes\identity_3$, $P_R=\tilde P_R\otimes\identity_3$ and $P_R$ commutes with $C$. Thus, there are controlled-phase gates pairwise between any of the 16 logical qubits. This is readily verified.
\end{example}

\section{Distance Balancing}\label{sec:distance_balance}

The transpose Tanner codes have the prospect of a highly asymmetric distance, $d_X\sim l$ and $d_Z\sim \order{1}$. This was already a problem in the original balanced product paper (this was an asymmetry between distances $d_X \sim l$ and $d_Z \sim n$), where it was resolved with distance balancing. Distance balancing, based on \cite{evra2022}, seeks to redress the asymmetry, taking our original parity checks $H_X$ and $H_Z$, and introducing a new classical code (full rank parity check) $H_c$, combining them into a new code using
\begin{equation}\label{eq:balancing_transversal}
\begin{aligned}
\tilde H_Z&=\begin{bmatrix}
H_Z\otimes\identity & 0 \\
\identity\otimes H_c & H_X^T\otimes\identity
\end{bmatrix}\\
\tilde H_X&=\begin{bmatrix}
H_X\otimes\identity & \identity\otimes H_c^T
\end{bmatrix}
\end{aligned}
\end{equation}
This classical code is chosen to be a good LDPC code, meaning that while acting on $n_c$ physical bits, it has $k_c,d_c\sim n_c$ as well. Such codes exist (and are very common for large enough $n_c$). The resulting code encodes $\tilde k=kk_c$ logical qubits, where $k_c=n_c-m_c$, and $m_c$ is the number of rows in $H_c$, assuming that $H_c$ is full rank. The distances of the new code are 
\begin{align*}
\tilde d_X&=d_Xd_c \\
\tilde d_Z&=d_Z
\end{align*}
as a consequence of the new logical operators
$$
\tilde l_X=\begin{bmatrix}l_X\otimes l_c \\0\end{bmatrix}, \qquad \tilde l_Z=\begin{bmatrix} l_Z\otimes e \\ 0 \end{bmatrix},
$$
where $l_Z$ is a logical $Z$ of the original code, $l_c$ is a codeword of the classical code, and $e$ is a unit vector such that $e\cdot l_c=1$. (Of course, one has to prove that there are no lower weight operators.) In the situation where $d_Z> d_X$, one simply selects $d_c=d_Z/d_X$, and the distance is rebalanced. If $d_Z<d_X$, then we should switch around the definitions to achieve a similar effect:
\begin{equation}\label{eq:balancing_not_transversal}
\begin{aligned}
\tilde H_X&=\begin{bmatrix}
H_X\otimes\identity & 0 \\
\identity\otimes H_c & H_Z^T\otimes\identity
\end{bmatrix}\\
\tilde H_Z&=\begin{bmatrix}
H_Z\otimes\identity & \identity\otimes H_c^T
\end{bmatrix}
\end{aligned}
\end{equation}

In either case, we must assess whether the $\tilde H_X$ satisfies the conditions of \cref{lem:transversal}. A term such as $H_X\otimes\identity$ is essentially multiple parallel, independent, copies of $H_X$. So, if $H_X$ already satisfies the dot product properties, $H_X\otimes\identity$ does as well. By selecting $p=[111\ldots 100\ldots 0]$, $\tilde H_X$ of \cref{eq:balancing_transversal} supports transversal gates if the original $H_X$ did. However, the $L_X$ may no longer act on distinct qubits unless we impose that the logicals of $H_c$ act on distinct qubits. Upon imposing that, there is little to be gained by distance balancing.

Alternatively, we could select $H_c$ to be a Tanner code with the same built-in local code, ensuring the correct transversal properties for $H_c$ as well. Then, the individual rows of $H_X\otimes\identity$ and $\identity\otimes H_c$ in \cref{eq:balancing_not_transversal} could certainly satisfy \cref{lem:transversal}. However, the cross terms, where we select a row of $H_X\otimes\identity$ and a row of $\identity\otimes H_c$ are certainly not always 0 mod 2. For this reason, the distance balancing in the useful direction does not preserve the transversal gate property and we cannot use it. It is unclear whether other variants of distance balancing exist which would allow us to do this.


\section{Direct Construction}\label{sec:direct}

Our original supposition was that the local code in the Tanner graph construction might be helpful. However, the best that we've managed is somewhat disappointing in terms of the storage density and distance trade-offs; we aren't even really using the expansion properties. Instead, we can achieve essentially the same thing (with slightly reduced overheads) in a much more straightforward manner, inspired by one of our examples.

For the balanced product code, let $C_0$ be an appropriate $r\times s$ ``local code'' (although we are not performing the same Tanner graph construction) that has the right properties for a given level $q$ transversal phase gate, such as a Hamming code. The transpose does not need those properties. Imagine we want to encode $k$ logical qubits, and have an $X$ distance of $d_X$. We simply set
$$
A^T=\identity_{k d_X}\otimes C_0
$$
(creating many copies of $C_0$) and
$$
C=R_0\otimes\identity_{kr},\qquad R=R_0^T\otimes\identity_{ks}
$$
where $R_0$ is a permutation matrix with order $d_X$. Since $C_0$ is built in block-wise, the rows of the left-hand block of $H_X$ clearly satisfy the required properties for transversal gates. The logical gates can be constructed as \footnote{There are many more logical operators, just not ones we choose to keep for the transversal gates. In particular, the $J_s$ in $L_Z$ may be replaced by a full set of null vectors of $C_0$, which is how we see that $d(C_0)$ upper bounds the $Z$ distance, while $J_s$ in $L_X$ may be replaced by unit vectors, showing that the $X$ distance is upper bounded by $d_X$.}
\begin{align*}
L_X&= J_{d_X}\otimes\identity_k\otimes J_s\\
L_Z&= e\otimes \identity_k\otimes J_s,
\end{align*}
where $J_a$ is an all-ones vector of length $a$ and $e$ is a unit vector. All $L_X$ logical operators act on distinct qubits, so $l_1\cdot l_2=0$, and the use of $J_s$ means that for $l_1\cdot r_1$, this takes runs of $s$ bits in $r_1$, each of which corresponds to a block of $C_0$, and each $C_0$ has even row weight so $|l_1\cdot r_1|=0\text{ mod }2^{q}$. Thus, the code has the required transversal phase gates.

In fact, this is literally just $k$ parallel copies of a single-qubit storage option where we simply have the control to change the $X$ distance quite conveniently, while maintaining the LDPC property. This emphasises the fact that the code doesn't just have the transversal logical $P_q$ gate, it has individual logical $P_q$. Thus, we must also have controlled-($q-1$) phases between individual pairs of qubits. Explicit examples are given in \cite{kay2025c}.

It is also worth observing that the constructed code is also a hypergraph product code. As such, we won't give a separate version of the construction, which essentially boils down to the same thing. This construction is simple enough that we can build an example with $q=3$ and $k=2$, and explicitly verify, by enumerating every possible basis state, that it does indeed have the claimed transversal $T$ gate, in apparent contradiction with \cite{burton2022}.


\subsection{Hypergraph No-Go}

In \cite{burton2022}, it was stated that when ``robust'' codes $A$ and $B$ are used to form a hypergraph product code, then it is impossible to have non-Clifford transversal logical operators. Since we claim to have bypassed this, how have we achieved that? The key lies in the definition of robust, which has two different forms in \cite{burton2022}. The primary definition, from which the main theorem derives, requires that an $m\times n$ full rank parity check matrix $A$ should have two distinct sets of $k=n-m$ qubits that satisfy certain properties. Note, however, that given there are only $n$ qubits, if $k>n/2$, then there cannot be two distinct sets, and the code is not robust. In this case, it may be possible to have transversal non-Clifford gates. It is subsequently claimed in \cite{burton2022} that an equivalent definition of robustness is that if (up to column permutations)
$$
A=\begin{bmatrix} J^T & \identity \end{bmatrix}
$$
then $J$ should be full rank. Somewhat misleadingly, our examples appear to fulfil this definition of robustness! The resolution comes in an implicit assumption in the definition that $k<n/2$. Hence this secondary characterisation does not apply in our case.

Let us show that the cases we have examined in this paper are not robust, and therefore the no-go result of \cite{burton2022} does not apply. For the transpose Tanner codes of \cref{sec:balanced_transverse}, assume that we have some initial code $A_0$ which is an $m\times n$ matrix with row totals $s$ and column totals $c$, meaning that $ms=nc$. Now take the Tanner code of it with a matrix $C_0$ which is $s\times s$. This yields an $ms\times n$ parity-check matrix $A$, and the transpose, which is the one that we use for our code, is $n\times ms=n\times nc$. If $c> 2$, this means we must be working in the regime where the code is not robust ($k\geq n(c-1)> nc/2$) and the no-go theorem doesn't apply.

For the direct construction of \cref{sec:direct}, the hypergraph product version simply takes the form $A=\identity\otimes C_0$ for some $C_0\in\{0,1\}^{r\times s}$ satisfying the transversal gate properties, and $B$ can be any expanding graph. All the $C_0$ matrices, such as the Hamming codes, that we have used in our examples satisfy $r<s/2$, meaning the no-go theorem doesn't apply. More generally, there is no doubting that if $C_0$ satisfies the transversal properties, then so do multiple parallel copies \footnote{It was proven in \cite{koutsioumpas2022} that the smallest $q$-orthogonal code with distance at least 3 has $s=2^{q+1}-1$ columns (such cases have $r=q+1$ rows), and our conditions for transversality, \cref{lem:transversal}, include the $q$-orthogonal conditions as a subset. There could be slightly larger codes with many more rows, but the large gap between the number of rows and columns for the smallest possible case and $q\geq 3$ would suggest that this is unlikely.}!

\section{Summary}

In this paper, we have given the first constructions of qLDPC codes that have transversal gates which are not Clifford. Any $P_q$ gate can be realised, with a corresponding cost in the `low density' nature. The code distance necessarily displays a significant asymmetry between $X$ and $Z$ errors that we were unable to correct with distance balancing. Nevertheless, this shows that there are cases which sidestep the no-go result of \cite{burton2022}, and hopefully therefore represents a critical first step towards qLDPC codes with transversal gates and better error correction properties. Independently from this, we have collected together a toolbox of, assuredly mostly not novel, steps that help us understand what the target gates are for transversal gates and how they can be synthesised, with minimal additional resources, into a universal quantum computation. We have demonstrated, in particular, how the availability of a single transversal phase gate $P_{L,q}^{\{1,1,1,\ldots,1\}}=P_q^p$ implies the availability of a whole wealth of individual logical phase gates, and multi-controlled phase gates (depending on $q$).

What is the performance of the respective codes? For the balanced product code, if we start from an $s$-regular graph of $m$ vertices, the final code will involve $3sm/2$ physical qubits. If its symmetry has a period of $l$, then it has $\order{sm/l}$ logical qubits. The distances are $d_X\sim \order{l}$ and $d_Z\sim \order{1}$. For instance, \cite{breuckmann2021} describes the family of LPS-expanders (Lubotzky, Phillips, Sarnak) which have $l\sim m^{1/3}$. Unfortunately, they are limited to values of $s$ where $s-1$ is prime, while we require odd $s$. It should be possible to take, for example, $s=30$, and use two copies of the 15-bit code. While we have not explicitly considered this option, we can update \cref{eq:logXtrans}, replacing the $v$ with two different instances, the all-ones vectors across each of the two copies.

However, it should be noted that the good classical code is not really doing anything for us. Instead, we might as well use the more direct, compact, constructions of \cref{sec:direct}, which enable us to arbitrarily tune the trade-off between $k$ and $d_X$. The hypergraph product code construction exhibits an equivalent efficiency.

Now that we know that instances of qLDPC codes with interesting transversal phase gates exist, and what to look for, the challenge for the future is to find instances with much better distance performance.


This work was supported by the Engineering and Physical Sciences Research Council [grant number EP/Y004507/1]. We would like to thank the research group of Dan Browne at UCL for useful conversations.

%


\appendix\onecolumngrid
\crefalias{section}{appendix}
\section{Incidence Matrix for Kirkman's Schoolgirl Problem}\label{kirkman}

\begin{align*}
I_0&=
\left[
\begin{array}{ccccccccccccccccccccccccccccccccccc}
 1 & 0 & 0 & 0 & 1 & 1 & 0 & 0 & 1 & 0 & 1 & 0 & 0 & 0 & 0 & 1 & 0 & 0 & 0 & 0 & 1 & 0 & 0 & 0 & 0 & 0 & 0 & 0 & 0 & 0 & 0 & 0 & 0 & 0 & 0 \\
 1 & 1 & 0 & 0 & 0 & 0 & 1 & 0 & 0 & 1 & 0 & 1 & 0 & 0 & 0 & 0 & 1 & 0 & 0 & 0 & 0 & 1 & 0 & 0 & 0 & 0 & 0 & 0 & 0 & 0 & 0 & 0 & 0 & 0 & 0 \\
 0 & 1 & 1 & 0 & 0 & 1 & 0 & 1 & 0 & 0 & 0 & 0 & 1 & 0 & 0 & 0 & 0 & 1 & 0 & 0 & 0 & 0 & 1 & 0 & 0 & 0 & 0 & 0 & 0 & 0 & 0 & 0 & 0 & 0 & 0 \\
 0 & 0 & 1 & 1 & 0 & 0 & 1 & 0 & 1 & 0 & 0 & 0 & 0 & 1 & 0 & 0 & 0 & 0 & 1 & 0 & 0 & 0 & 0 & 1 & 0 & 0 & 0 & 0 & 0 & 0 & 0 & 0 & 0 & 0 & 0 \\
 0 & 0 & 0 & 1 & 1 & 0 & 0 & 1 & 0 & 1 & 0 & 0 & 0 & 0 & 1 & 0 & 0 & 0 & 0 & 1 & 0 & 0 & 0 & 0 & 1 & 0 & 0 & 0 & 0 & 0 & 0 & 0 & 0 & 0 & 0 \\
 1 & 0 & 0 & 0 & 0 & 0 & 0 & 0 & 0 & 0 & 0 & 0 & 1 & 0 & 1 & 0 & 0 & 0 & 0 & 0 & 0 & 0 & 0 & 1 & 0 & 1 & 0 & 0 & 0 & 1 & 1 & 0 & 0 & 0 & 0 \\
 0 & 1 & 0 & 0 & 0 & 0 & 0 & 0 & 0 & 0 & 1 & 0 & 0 & 1 & 0 & 0 & 0 & 0 & 0 & 0 & 0 & 0 & 0 & 0 & 1 & 1 & 1 & 0 & 0 & 0 & 0 & 1 & 0 & 0 & 0 \\
 0 & 0 & 1 & 0 & 0 & 0 & 0 & 0 & 0 & 0 & 0 & 1 & 0 & 0 & 1 & 0 & 0 & 0 & 0 & 0 & 1 & 0 & 0 & 0 & 0 & 0 & 1 & 1 & 0 & 0 & 0 & 0 & 1 & 0 & 0 \\
 0 & 0 & 0 & 1 & 0 & 0 & 0 & 0 & 0 & 0 & 1 & 0 & 1 & 0 & 0 & 0 & 0 & 0 & 0 & 0 & 0 & 1 & 0 & 0 & 0 & 0 & 0 & 1 & 1 & 0 & 0 & 0 & 0 & 1 & 0 \\
 0 & 0 & 0 & 0 & 1 & 0 & 0 & 0 & 0 & 0 & 0 & 1 & 0 & 1 & 0 & 0 & 0 & 0 & 0 & 0 & 0 & 0 & 1 & 0 & 0 & 0 & 0 & 0 & 1 & 1 & 0 & 0 & 0 & 0 & 1 \\
 0 & 0 & 0 & 0 & 0 & 1 & 0 & 0 & 0 & 0 & 0 & 0 & 0 & 0 & 0 & 0 & 0 & 0 & 1 & 1 & 0 & 1 & 0 & 0 & 0 & 1 & 0 & 0 & 0 & 0 & 0 & 0 & 1 & 0 & 1 \\
 0 & 0 & 0 & 0 & 0 & 0 & 1 & 0 & 0 & 0 & 0 & 0 & 0 & 0 & 0 & 1 & 0 & 0 & 0 & 1 & 0 & 0 & 1 & 0 & 0 & 0 & 1 & 0 & 0 & 0 & 1 & 0 & 0 & 1 & 0 \\
 0 & 0 & 0 & 0 & 0 & 0 & 0 & 1 & 0 & 0 & 0 & 0 & 0 & 0 & 0 & 1 & 1 & 0 & 0 & 0 & 0 & 0 & 0 & 1 & 0 & 0 & 0 & 1 & 0 & 0 & 0 & 1 & 0 & 0 & 1 \\
 0 & 0 & 0 & 0 & 0 & 0 & 0 & 0 & 1 & 0 & 0 & 0 & 0 & 0 & 0 & 0 & 1 & 1 & 0 & 0 & 0 & 0 & 0 & 0 & 1 & 0 & 0 & 0 & 1 & 0 & 1 & 0 & 1 & 0 & 0 \\
 0 & 0 & 0 & 0 & 0 & 0 & 0 & 0 & 0 & 1 & 0 & 0 & 0 & 0 & 0 & 0 & 0 & 1 & 1 & 0 & 1 & 0 & 0 & 0 & 0 & 0 & 0 & 0 & 0 & 1 & 0 & 1 & 0 & 1 & 0
\end{array}
\right]
\end{align*}

\end{document}